\documentclass[
aps,
pra,
reprint,
superscriptaddress
]{revtex4-2}

\usepackage[T1]{fontenc}
\usepackage[utf8]{inputenc}
\usepackage{lmodern}
\usepackage{mathtools}
\usepackage{amssymb}
\usepackage{amsthm}
\usepackage{amsmath}
\usepackage{amssymb}
\usepackage{bm}
\usepackage{graphicx}
\usepackage{dcolumn}
\usepackage{xcolor}
\usepackage{siunitx}
\usepackage[
colorlinks=true,
linktocpage=true
]{hyperref}
\usepackage{longtable}

\makeatletter
\newtheorem*{rep@theorem}{\rep@title}
\newcommand{\newreptheorem}[2]{%
\newenvironment{rep#1}[1]{%
 \def\rep@title{#2 \ref{##1}}%
 \begin{rep@theorem}}%
 {\end{rep@theorem}}}
\makeatother

\newtheorem{theorem}{Theorem}
\newreptheorem{theorem}{Theorem}
\newtheorem{lemma}[theorem]{Lemma}
\newtheorem{corollary}[theorem]{Corollary}

\newtheorem{remark}[theorem]{Remark}

\newtheorem*{def:Tri_Game_6}{$\boldsymbol{C_6}$ hidden linear function game}
\newtheorem*{def:C_6_Stabilizer_Game}{$\boldsymbol{C_6}$ cubic Boolean function games}
\newtheorem*{def:Stab_Sub_Game}{$\boldsymbol{C_6}$ stabilizer submeasurement games}
\newtheorem*{def:C_n_Stabilizer_Game}{$\boldsymbol{C_n}$ cubic Boolean function games}
\newtheorem*{def:general_Stab_Sub_Game}{$\boldsymbol{C_n}$ stabilizer submeasurement games}
\newtheorem{thm:Depth_D_Separation}{Theorem}
\newtheorem*{lem:C6_Stabilizers}{Lemma}

\newcommand{\R}{\mathbb{R}}

\DeclareMathOperator{\V}{Var}
\DeclareMathOperator{\Cov}{Cov}
\DeclareMathOperator{\tr}{tr}
\newcommand{\ket}[1]{| #1 \rangle}
\newcommand{\bra}[1]{\langle #1 |}

\newcommand{\op}[2]{\ket{#1}\!\bra{#2}}

\newcommand{\ev}[1]{\langle #1 \rangle}
\newcommand{\T}{\mathsf{T}}
\renewcommand{\l}[1]{\mathopen{}\left#1}
\renewcommand{\r}[1]{\right#1\mathclose{}}

\renewcommand{\S}{\mathcal{S}}
\newcommand{\fid}{\mathcal{F}}
\DeclareMathOperator{\sgn}{\mathrm{sgn}}
\newcommand{\CZ}{\mathrm{C}Z}

\usepackage{suffix}

\newcommand{\az}[1]{{\textcolor{teal}{[#1]}}}
\WithSuffix\newcommand\az*[1]{{\textcolor{teal}{#1}}}


\begin{document}

\title{Quantum computational advantage attested by nonlocal games with the cyclic cluster state}

\author{Austin K. Daniel}\email{austindaniel@unm.edu}
\affiliation{Department of Physics and Astronomy, Center for Quantum Information and Control, University of New Mexico, Albuquerque, New Mexico 87106, USA}

\author{Yingyue Zhu}
\affiliation{Joint Quantum Institute, Department of Physics, University of Maryland, College Park, MD 20742, USA}

\author{C. Huerta Alderete}
\affiliation{Joint Quantum Institute, Department of Physics, University of Maryland, College Park, MD 20742, USA}

\author{Vikas Buchemmavari}
\affiliation{Department of Physics and Astronomy, Center for Quantum Information and Control, University of New Mexico, Albuquerque, New Mexico 87106, USA}

\author{Alaina M. Green}
\affiliation{Joint Quantum Institute, Department of Physics, University of Maryland, College Park, MD 20742, USA}

\author{Nhung H. Nguyen}
\affiliation{Joint Quantum Institute, Department of Physics, University of Maryland, College Park, MD 20742, USA}

\author{Tyler G. Thurtell}
\affiliation{Department of Physics and Astronomy, Center for Quantum Information and Control, University of New Mexico, Albuquerque, New Mexico 87106, USA}

\author{Andrew Zhao}
\affiliation{Department of Physics and Astronomy, Center for Quantum Information and Control, University of New Mexico, Albuquerque, New Mexico 87106, USA}

\author{Norbert M. Linke}\email{linke@umd.edu}
\affiliation{Joint Quantum Institute, Department of Physics, University of Maryland, College Park, MD 20742, USA}

\author{Akimasa Miyake}\email{amiyake@unm.edu}
\affiliation{Department of Physics and Astronomy, Center for Quantum Information and Control, University of New Mexico, Albuquerque, New Mexico 87106, USA}

\date{\today}


\begin{abstract}

    We propose a set of Bell-type nonlocal games  {that can be used to prove an unconditional quantum advantage in an objective and hardware-agnostic manner}. In these games, the circuit depth needed to prepare a cyclic cluster state and measure a subset of its Pauli stabilizers on a quantum computer is compared to that of classical Boolean circuits with the same, nearest-neighboring gate connectivity. 
    Using a circuit-based trapped-ion quantum computer, we prepare and measure a six-qubit cyclic cluster state with an overall fidelity of 60.6\% and 66.4\%, before and after correcting for measurement-readout errors, respectively. Our experimental results indicate that while this fidelity readily passes conventional (or depth-0) Bell bounds for local hidden-variable models, it is on the cusp of demonstrating  {a higher probability of success than what is possible by} depth-1 classical circuits. Our games offer a practical and scalable set of quantitative benchmarks for quantum computers in the pre-fault-tolerant regime as the number of qubits available increases.

\end{abstract}

\maketitle



\section{\label{sec:intro}Introduction}

There are many metrics to characterize the quality of noisy intermediate-scale quantum (NISQ) computers~\cite{preskill2018quantum}. For example, qubit count, gate count, gate fidelities, and quantum volume~\cite{cross2019validating} are common options. However, it is generally agreed upon that a more comprehensive picture is given by the device's overall performance in executing a variety of computational tasks. To this end, we look toward computational tasks with two desirable properties in this work. First, they should have objective targets, beyond which one can prove that the NISQ computer has outperformed some particular model of classical computation. This is different from the task of demonstrating so-called quantum supremacy via sampling random quantum circuits~\cite{arute2019quantum,wu2021strong} or bosonic linear interferometers~\cite{zhong2020quantum}, which relies on assumptions of the underlying problem's computational hardness, making it a moving target based on current state-of-the-art classical hardware and algorithms. In contrast, we seek an unconditional demonstration of quantum computational advantage. Second, the computational task should be agnostic to the choice of hardware implementation, so that it allows a fair comparison of results among different architectures. In particular, as most of NISQ computers have only geometrically-local entangling gates, here we consider a one-dimensional (1D) geometry with a periodic boundary condition (i.e., a cyclic array of qubits). 

Natural candidates satisfying the first desideratum are Bell-type nonlocal games~\cite{bell1964on,cleve2004consequences,Brunner_review_nonlocality_2014}, which have a long history of experimental demonstrations (see Refs.~\cite[Sec.~VII]{Brunner_review_nonlocality_2014} and \cite{hensen2015loophole, shalm2015strong, giustina2015significant, rosenfeld2017event}) and
have found renewed interest in the context of classically verifiable quantum advantage~\cite{brakersi2018cryptographic,brakerski2020simpler, kahanamoku2021classically}.
The violation of a Bell inequality sets an objective threshold that a quantum device must surpass in order to evade description by an analogous classical model. Extending these results to more general causal scenarios has been a topic of recent interest~\cite{fritz2012beyond, chaves2015unifying, chaves2017causal}.
 {For example, Bell scenarios based on generalized Greenberger-Horne-Zeiliinger (GHZ) states, an example of a so-called graph state~\cite{briegel2001persistent, hein2006entanglement, scarani2005nonlocality, guhne2005bell, waegell2019benchmarks}, have been shown to be capable of computing arbitrary Boolean functions~\cite{hoban2011non-adaptive, Demirel_2021}.  Yet, preparing such a state with nearest-neighboring entangling gates on a 1D geometry requires linear-depth circuits.  Thus we don't expect any quantum advantage when comparing quantum and classical circuit depths that accomplish this task.}

 {On the other hand}, cyclic cluster states are graph states that require only nearest-neighbor entangling gates. This restricted connectivity is inherent to a wide variety of quantum-computing platforms, thus satisfying our second desideratum. Cyclic cluster states also form the basis of Bell-type scenarios that go beyond the traditional locality assumptions of Bell's theorem, refuting even classical theories assisted by a limited amount of communication~\cite{barrett2007modeling}. By treating these communication-assisted classical strategies as classical circuits of limited depth, Ref.~\cite{bravyi2018quantum} showed that shallow-depth quantum circuits are more powerful than their classical counterparts. In light of these results, the past few years have seen a number of novel works laying the theoretical foundation for unconditionally demonstrating a quantum advantage  {with constant-depth quantum circuits of} nearest-neighbor entangling gates~\cite{coudron2018trading,gall2018average,watts2019exponential,grier2019interactive,bravyi2020quantum,grier2021interactive,hasegawa2021quantum}.

\begin{table*}
\centering
\caption{Summary of the nonlocal games studied in this work, along with our main experimental results using a trapped-ion quantum computer. The notation and definition of the games are provided in their respective sections. For the optimal quantum strategy, each game involves preparing the six-qubit cyclic cluster state $ \ket{C_6} $ and then evaluating a number of stabilizers from global Pauli measurement settings,  {which may differ for the games}. As the success probability $\Pr_C[\mathrm{win}]$ of a classical strategy may depend on the circuit considered, we report bounds for both depth-0 and depth-1 classical circuits. For the details of our quantum experiments, see Sec.~\ref{sec:exp_results};~the cubic Boolean function (CBF) results were obtained by a tomography experiment measuring the relevant stabilizers (data presented in Fig.~\ref{fig:stabilizer_values_bar_graph}). The stabilizer submeasurement (SS) results were obtained by a separate experiment (data presented in Table~\ref{tab:triangle_game_results}). We report experimental success probabilities $ \widehat{\Pr}_Q[\mathrm{win}] $ estimated from both the raw output of our quantum device, and after state-preparation-and-measurement (SPAM) error correction. The experimental uncertainties, denoted by the values in parentheses, correspond to a $1\sigma$ standard error within the number of significant figures reported. For each input of a game, we took $N = 5000$ shots.}
\label{tab:experimental_results}
\begin{tabular*}{\linewidth}{c @{\extracolsep{\fill}} c c c c c}
    \toprule
    {} & {Number of } & \multicolumn{2}{c}{$\Pr_C[\mathrm{win}]$} & \multicolumn{2}{c}{Experimental $\widehat{\Pr}_{Q}[\mathrm{win}]$} \\
    \cline{3-4}\cline{5-6}
    Game & (stabilizers, settings) & {Depth-0 bound} & Depth-1 bound & Raw value & SPAM-corrected \\
    \hline
    $\mathrm{CBF}(C_6,\{0,1\}^6)$ (Sec.~\ref{sec:C6_stab_game}) & $(63,63)$ & $23/32 = 71.875\%$ & $100\%$ & $80.30(8)\%$ & $83.21(9)\%$  \\
    \hline
    $\mathrm{CBF}(C_6,\mathcal{I}_{\mathrm{Mermin}}^{(55)})$ (Sec.~\ref{sec:Mermin_game}) & $(55,55)$ & $37/55 \approx 67.3\%$ & $100\%$ & $79.51(9)\%$ & $82.56(10)\%$ \\ \hline
    $\mathrm{SS}(C_6,\mathcal{I}_{\mathrm{HLF}}^{(8)})$ (Sec.~\ref{sec:SS8}) & $(5,8)$ & $7/8 = 87.5\%$ & $87.5\%$ & $81.35(59)\%$ & $85.79(53)\%$  \\ \hline
    $\mathrm{SS}(C_6,\mathcal{I}_{\mathrm{HLF}}^{(5)})$ (Sec.~\ref{sec:SS5}) & $(5,5)$ & $4/5 = 80\%$ & $80\%$ & $79.42(25)\%$ & $84.68(23)\%$  \\
    \botrule
\end{tabular*}
\end{table*}

In this paper, we analyze two kinds of Bell-type nonlocal games and demonstrate proof-of-principle experimental implementations with a six-qubit trapped-ion quantum computer~\cite{Debnath2016p63}. Both games utilize the $n$-qubit cyclic cluster state and are motivated as follows. The first game is based on the so-called graph state Bell inequality~\cite{guhne2005bell}, which we recast as the computation of a particular nonlinear Boolean function via the measurement of elements in the cyclic cluster state's stabilizer group.  
While this game has utility as estimating the state fidelity of a cyclic cluster state and benchmarking the noise of our experimental device, we also show that for an arbitrary-size instance of this game, a depth-1 classical circuit with the same gate connectivity as our quantum circuit can win the game with unit success probability. This implies that we cannot use this first game to demonstrate a quantum computational advantage in terms of circuit depth.  

This motivates the second game,
which is based on the smallest nontrivial instance of the so-called 2D hidden linear function problem introduced in Ref.~\cite{bravyi2018quantum}.  If played on a 2D grid of qubits, this game is capable of demonstrating an unconditional separation between the power of a constant-depth quantum circuits and sublogarithmic-depth classical circuits~\cite{bravyi2018quantum}.  Here we show that, on a 1D cycle, this family of games is capable of demonstrating an unconditional separation between constant-depth quantum circuits and sublinear-depth classical circuits with the same gate connectivity, which we define precisely in Theorem~\ref{thm:n_qubit_ss_games}. Furthermore, we improve upon the original game of Ref.~\cite{bravyi2018quantum} by reducing the threshold of success probability that a quantum computer must exceed down to $80\%$ from $87.5\%$.  
Our experimental implementation of the smallest instance of this second game using a 6-qubit cyclic cluster state demonstrates a success probability that is on the cusp of  {exceeding the success probability bound for} depth-1 classical circuits.  {This 6-qubit game is the first instance in a family of games that is capable of demonstrating an unconditional quantum advantage as we asymptotically scale the problem size.}
We summarize all of our results in Table~\ref{tab:experimental_results}.

The paper is organized as follows.  In Sec.~\ref{sec:background} we review background material necessary to understand our key results and establish notation.  In Sec.~\ref{sec:CBF_Games} we introduce the first type of nonlocal game, called cubic Boolean function games.  In Sec.~\ref{sec:ss_games} we discuss the second type of nonlocal game, called stabilizer submeasurement games. In Sec.~\ref{sec:exp_results} we present the results of our experimental implementation of the quantum strategies for these games.  In Sec.~\ref{sec:generalization} we discuss the generalization of these two games to the $n$-qubit scenario and formally state our claim to quantum advantage in Theorem~\ref{thm:n_qubit_ss_games}.  Finally, we conclude with a discussion and outlook in Sec.~\ref{sec:discussion}.



\section{\label{sec:background}Background}

\subsection{\label{sec:graph_states}Graph States}

Stabilizer states are a particular class of many-body quantum states that have a wide range of applications in quantum computing (e.g., for error correction, measurement-based quantum computing, and tests of Bell nonlocality). An $n$-qubit stabilizer state is defined as the joint $+1$ eigenstate of $2^n$ commuting Pauli operators, called the stabilizer group. Let $X$, $Y$, and $Z$ be the Pauli matrices. $I$ denotes the $2\times 2$ identity matrix and $\openone$ the identity operator acting on the whole system. Let $\ket{0}$ and $\ket{1}$ be the $+1$ and $-1$ eigenstates of $Z$, respectively.  Denote by $X_j$ an $n$-qubit Pauli operator that acts as $X$ on the $j$th qubit and as the identity elsewhere (similarly for $Y_j$ and $Z_j$).  Any $n$-qubit Pauli operator can be written (up to an overall phase) as $E(\mathbf{a},\mathbf{b}) = \prod_{j=0}^{n-1} i^{a_jb_j} X_j^{a_j}Z_j^{b_j}$, where $\mathbf{a},\mathbf{b}\in\{0,1\}^n$.  Denote the $j$th term in the product for $E(\mathbf{a},\mathbf{b})$ by $E_j(a_j,b_j)$.

\begin{figure*}
    \centering
    \includegraphics[width=0.8\textwidth]{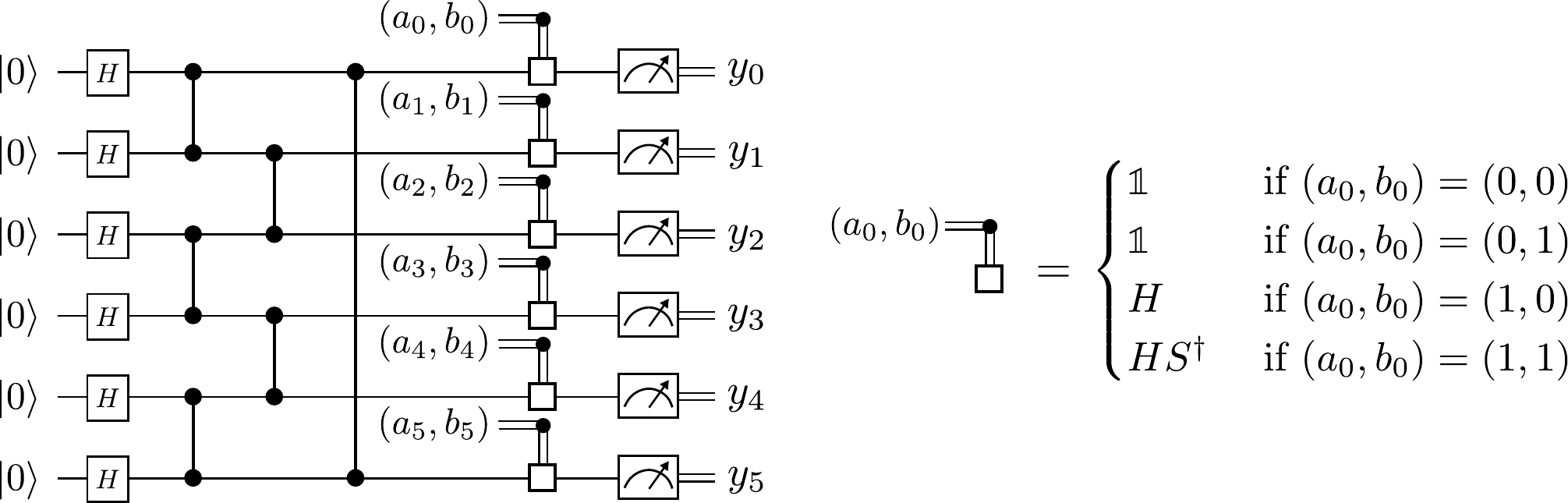}
    \includegraphics[width=0.8\textwidth]{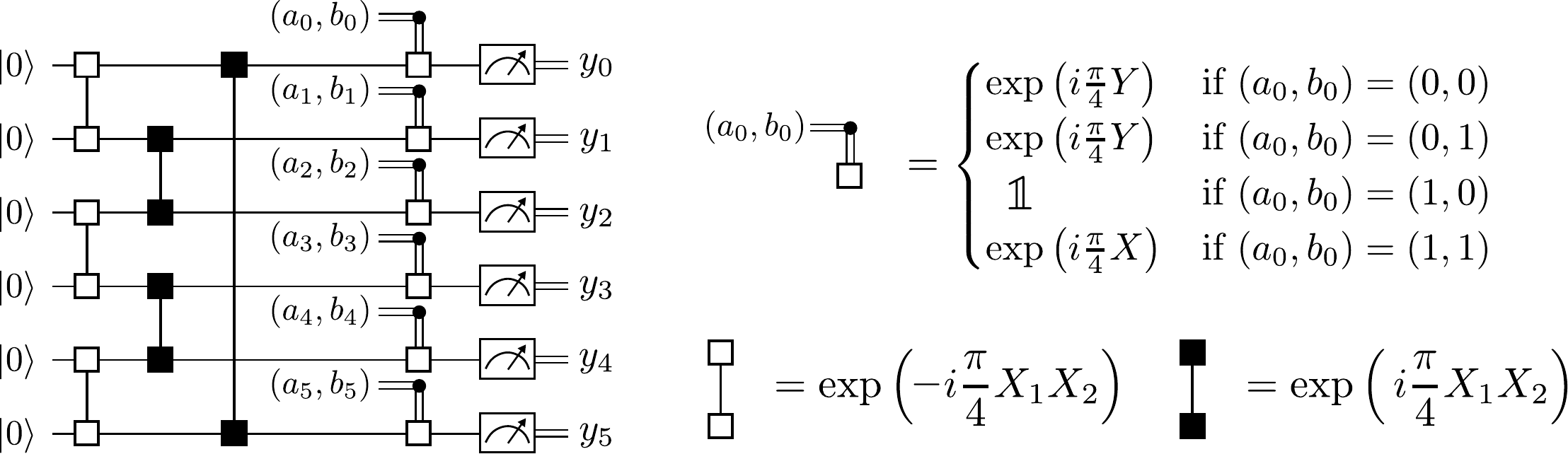}
     \caption{(Top):~The quantum circuit that prepares the six-qubit cyclic cluster state $| C_6 \rangle$ using $\CZ$ gates, followed by the measurement of the Pauli operator $E(\mathbf{a},\mathbf{b})$ for some $\mathbf{a},\mathbf{b}\in\{0,1\}^6$. Namely, the final conditional rotation changes the measurement basis of the $j$th qubit to $Z$, $X$, or $Y$ whenever the classical input $(a_j,b_j)\in\{0,1\}^2$ is $(0,1)$, $(1,0)$, or $(1,1)$, respectively.  Input $(0,0)$ denotes the presence of an identity in the Pauli operator, in which case the qubit is measured in the $Z$ basis. (Bottom):~The same circuit, recompiled using $R_{XX}(\theta)$ gates which are native to our trapped-ion device, and optimized to reduce the number of single-qubit gates.}
    \label{fig:6qubit}
\end{figure*}

It is convenient to study stabilizer states in the so-called graph-state formalism~\cite{hein2006entanglement}.  For each graph $G = (V,E)$, with vertex set $V$ and edge set $E$, the corresponding graph state $|G\rangle$ is prepared by initializing a qubit at each vertex $v\in V$ in the state $|+\rangle = (|0\rangle + |1\rangle)/\sqrt{2}$ and applying a two-qubit controlled-$Z$ gate, $\CZ_{v,w} = (\openone + Z_v + Z_w - Z_v Z_w)/2$, between each pair of vertices $(v, w) \in E$ that share an edge. That is, the graph state is defined as
\begin{equation}\label{eq:graph_state_defn}
\ket{G} = \prod_{(j,k)\in E} \CZ_{j,k} \ket{+}^{\otimes |V|}.
\end{equation}
Any stabilizer state is equivalent to a graph state up to single-qubit Clifford gates (the set of gates generated by $H=(X+Z)/\sqrt{2}$ and $S=\sqrt{Z}$)~\cite{schingelman2001graph_stabilizers,grassl2002graphs, van_den_nest2004clifford_transformations}.

Conversely, any graph state is uniquely defined in terms of its stabilizer group.  From Eq.~\eqref{eq:graph_state_defn} it follows that $|G\rangle$ is the $+1$ eigenstate of all the elements of
\begin{equation}\label{Stabilizer_Relation}
\mathcal{S}_G = \left\langle X_v \prod_{l\in\mathcal{N}(v)} Z_l ~ \bigg\vert  ~ \forall v \in V \right\rangle,
\end{equation}
where $l\in \mathcal{N}(v)$ if and only if $(l,v)\in E$.  The notation $\langle~\cdot~\rangle$ indicates the set of all possible products generated by the operators contained in the brackets. The operators in the brackets are referred to as stabilizer generators. We may associate the generator $S_v = X_v \prod_{l\in\mathcal{N}(v)} Z_l$ with the vertex $v$.

In this work we focus on the six-qubit graph state on the cycle graph $\ket{C_6}$, also known as the six-qubit cyclic cluster state, where $C_6$ denotes the six-vertex cycle graph.  Following Eq.~\eqref{eq:graph_state_defn}, the preparation of $|C_6\rangle$, followed by a measurement of an arbitrary Pauli operator $E(\mathbf{a},\mathbf{b})$, is implemented by the quantum circuit in Fig.~\ref{fig:6qubit} (top).  For the experimental implementation using trapped ions, we recompile this circuit in terms of the native two-qubit M{\o}lmer--S{\o}rensen gates, defined as $R_{XX}(\theta) = \exp\l( -i\theta X \otimes X \r)$,
yielding the circuit in Fig.~\ref{fig:6qubit} (bottom). (Further details of the experimental setup are discussed in Sec.~\ref{sec:exp_setup}.) 

\subsection{\label{sec:fid_and_gme}State fidelity and detection of entanglement}

The fidelity between a target pure state $\ket{\psi}$ and a prepared (possibly mixed) state $\rho$ is given by
\begin{equation}
 \fid(\rho,\ket{\psi}) = \bra{\psi}\rho\ket{\psi}=\tr\l{(}\op{\psi}{\psi}\rho\r{)}.
\end{equation}
Accordingly, the fidelity between $\ket{\psi}$ and $\rho$ can be thought of as the expectation value of the projector onto $\ket{\psi}$ with respect to $\rho$.  In particular, the projector onto any $n$-qubit stabilizer state, such as a graph state $\ket{G}$, can be expressed in terms of the $2^n$ elements of its stabilizer group $\S_G$,
\begin{equation}
    \op{G}{G} = \frac{1}{2^n} \sum_{S\in\S_G} S.
\end{equation}
The fidelity between a graph state $ \ket{G} $ and a state $ \rho $ is therefore determined by the expectation values of the $2^n$ elements of the stabilizer group,
\begin{equation}
    \fid(\rho, \ket{G}) = \frac{1}{2^n} \sum_{S\in\S_G} \tr(\rho S).
    \label{eq:Fidelity_Full}
\end{equation}

Another important property of the prepared state is whether or not it possesses genuine multipartite entanglement. A quantum state is said to have genuine $n$-partite entanglement if it cannot be expressed as a convex sum of biseparable states (i.e., states that are separable with respect to some bipartition of the $n$-qubit system). Since the maximum fidelity any biseparable state can have with a connected graph state is $1/2$, the operator
\begin{equation}
    \mathcal{W} = \frac{\openone}{2} - |G\rangle \langle G|
\end{equation}
acts as a witness for genuine $n$-qubit entanglement~\cite{Toth_Witnesses_Guehne_entanglement}. Namely, $\tr(\rho\mathcal{W}) < 0$ implies the $\rho$ has genuine multipartite entanglement. Thus the presence of multipartite entanglement may be detected directly from the fidelity as
\begin{equation}
\label{eq:Witness_from_Fidelity}
\tr(\rho\mathcal{W})=\frac{1}{2}-{\mathcal{F}}(\rho,\ket{G}).
\end{equation}

\subsection{\label{sec:nonlocal_bkgd}Bell-type nonlocal games}

In this section we review the concept of nonlocal games and  {translate classical strategies with communication into classical circuits.} A nonlocal game~\cite{cleve2004consequences, Brunner_review_nonlocality_2014} is a computational scenario that takes place over several rounds.  In each round, multiple parties are each provided a piece of information about a global input string $\mathbf{x}\in\textrm{Input}\subset\{0,1\}^*$ (the notation $\{0,1\}^*$ denotes the set of arbitrary length binary strings, i.e., $\{0,1\}^* = \{0,1,00,01, \ldots\}$).
The parties must then respond to a referee, each with one bit, to produce an output string $\mathbf{y}\in\{0,1\}^*$ that has certain properties depending on the input.  The referee then checks that the output possesses the desired properties and accordingly tallies the round as a win or loss.  For a particular game, denote the set of valid outputs $\mathbf{y}$ for a given input $\mathbf{x}$ as $\textrm{Win}(\mathbf{x})$.  Each such game defines a relation problem:~find a valid input--output pair $(\mathbf{x},\mathbf{y})$ satisfying some binary relation $\mathcal{R}$, whereby $(\mathbf{x},\mathbf{y})\in\mathcal{R}$ if and only if $\mathbf{y}\in\mathrm{Win}(\mathbf{x})$.

\begin{figure}
    \centering
    \includegraphics[width=0.75\linewidth]{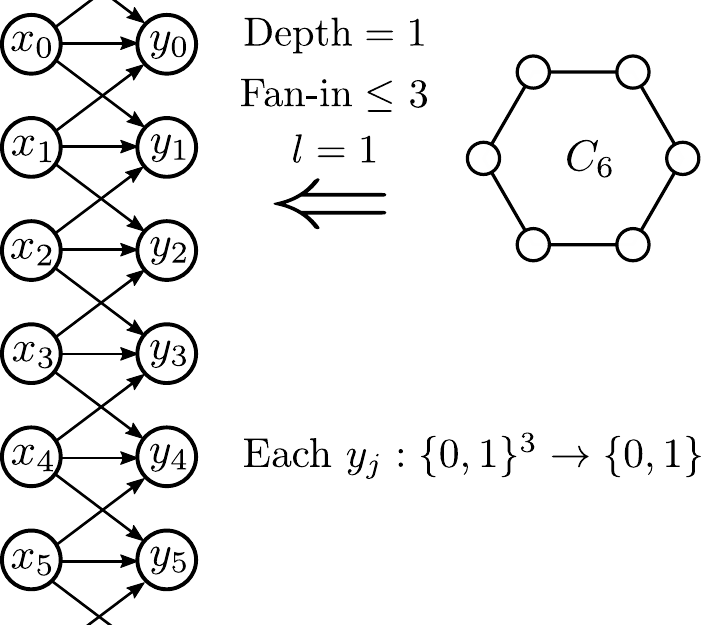}
    \caption{ {Representation} of a generic depth-1, fan-in~$\leq 3$ classical circuit that is geometrically restricted with respect to the graph $C_6$, as a directed acyclic graph. Periodic boundary conditions are imposed on the top and bottom edges on the directed acyclic graph. This particular circuit is constructed for problems where each party receives $l=1$ bit of information about the input. Each node labeled $y_j$ denotes an output of an arbitrary Boolean function of the three adjacent bits (i.e., it corresponds to an arbitrary  {three-bit} gate).   {For a general depth-$D$ circuit each output can depend on the $2D+1$ nearest inputs.}}
    \label{fig:DAG_C6_1}
\end{figure}

A strategy for a nonlocal game is a scheme by which the parties produce their outputs.  Suppose each party receives $l$ bits of information about the input $\mathbf{x}$.  In a quantum strategy, each party holds one qubit from a multipartite quantum state $\rho$, measures their qubit in some basis depending on their received bits, and then outputs the measurement outcome. We compare the quantum strategies implemented on our device (represented as quantum circuits) to classical strategies (represented as classical circuits with the same gate connectivity) in terms of their circuit depths.   {We remark that for each game studied in this work, a perfect quantum strategy  {with unit success probability} can be implemented with the constant-depth circuit depicted in Fig.~\ref{fig:6qubit}.}

In a classical strategy,  {each party may communicate} with neighboring parties to obtain a total of $k\leq K$ bits of information about the input for some integer $K$. Each party then manipulates their information via some Boolean function  $f:\{0,1\}^k\rightarrow\{0,1\}^*$ and repeats this process for a total of $D$ rounds of communication;~in the last round, each output is produced via some $y:\{0,1\}^k\rightarrow\{0,1\}$.  Each classical strategy can therefore be associated to a particular depth-$D$ classical circuit, consisting of gates drawn from an arbitrary gate set depending on no more than $K$ outputs of its neighboring gates in the previous layer (i.e., each gate is geometrically restricted with respect to some graph and has fan-in~$\leq K$), which compute the corresponding Boolean functions.  Here the depth of the classical circuit indicates the of the number of layers of  {parallelized multi-bit gates} applied. Each such circuit can be defined as a directed acyclic graph where each vertex is labeled by the corresponding gate it implements~\cite{vollmer1999introduction}.  {For example, in Fig.~\ref{fig:DAG_C6_1} the set of possible classical circuits with $D=1$ and $K=3$ that are geometrically restricted with respect to the cycle graph are visualized as a directed acyclic graph.}

To both quantum and classical strategies, one can assign an average success probability, denoted $\Pr[\mathrm{win}]$, which is the probability that the strategy will win a round of the game, given an input chosen uniformly at random.  The average success probability is computed from the conditional probability, $\Pr[\mathbf{y}|\mathbf{x}]$, for a strategy to produce output $\mathbf{y}$ given input $\mathbf{x}$, as
\begin{align}
\label{eq:success_prob_def}
    \Pr[\mathrm{win}] = \frac{1}{|\textrm{Input}|}\sum_{\mathbf{x}\in\textrm{Input}}\sum_{\mathbf{y\in\textrm{Win}(\mathbf{x})}} \Pr[\mathbf{y}|\mathbf{x}].
\end{align}
Strategies with $\Pr[\textrm{win}]=1$ are said to be perfect.  For each nonlocal game we study there exists a perfect quantum strategy. On the other hand, classical strategies for these games will produce a hierarchy of bounds depending on the depth of classical circuits. For brevity, we  express these bounds as
\begin{align}
\label{eq:ASP_Bounds_Def}
    \Pr\nolimits_C[\textrm{win}]\leq_{\textrm{Depth-0}} \beta_0 \leq_{\textrm{Depth-1}} \beta_1 \leq_{\textrm{Depth-2}} \beta_2 
    \leq
    \cdots .
\end{align}
 ``Depth-1'' and ``Depth-2'' denotes that $\beta_1$ and $\beta_2$ are the maximal values of $\Pr\nolimits_C[\mathrm{win}]$
for any geometrically restricted classical circuit with $D=1,2$, respectively, and $K=3l$. With slight abuse of notation, ``Depth-0'' denotes that $\beta_0$ is the maximum value for any classical circuit consisting of a single layer of local gates.  We say that their depth is ``zero" since these circuits correspond to classical strategies without communication, where each output $y_j$ can depend on at most $l$ bits received by the $j$th party. Surpassing any of these bounds with a quantum device is interpreted as a violation of a Bell-type inequality, which in turn demonstrates the achievement of a computational task that cannot be done with a classical circuit of that particular depth and geometry. 


\section{$C_6$ cubic Boolean function games}
\label{sec:CBF_Games}

Here we describe a nonlocal game adapted from the graph-state Bell inequalities presented in Ref.~\cite{guhne2005bell}.  In this game, parties are given partial information about an input string $\mathbf{x}\in\{0,1\}^n$ and are tasked to produce outputs such that the joint parity is equal to a particular cubic Boolean function associated to a graph $G=(V,E)$ evaluated on the input.  For a particular graph $G=(V,E)$ and input set $\mathcal{I}\subseteq\{0,1\}^n$ we denote the game as $\mathrm{CBF}(G,\mathcal{I})$.

For the sake of our six-qubit experimental demonstration, we define the game over the six-vertex cycle graph.  We remark that this six-player instance of the game contains the same qualitative features of the $n$-party scenario.  Namely, it has a non-trivial depth-0 bound, trivial depth-1 bound, and perfect quantum strategy.  For a discussion of a general size $n$ instance of the game we point the reader to Sec.~\ref{sec:generalization}.

\begin{def:C_6_Stabilizer_Game}[$\mathrm{CBF}(C_6,\mathcal{I})$]
Consider a game where six parties corresponding to the vertices of the cycle graph $C_6$ are each given a two-bit input $s_j = (a_j,b_j)\in\{0,1\}^2$, $j=0,\ldots,5$. The inputs $s_j$ are drawn as follows:~from a global input $\mathbf{x}= (x_0,\ldots,x_5)\in\mathcal{I}\subseteq\{0,1\}^6$, each party $j$ is given the bit $a_j=x_j$ and the parity of their neighbors' bits, $b_j=x_{j+1} + x_{j-1}~\mathrm{mod}~2$.  Each party $j$ then produces a one-bit output $y_j$.  The parties win the game whenever
\begin{equation}
    \sum_{j\in\mathrm{supp}(\mathbf{s})} y_j = \sum_{j=0}^5 x_{j-1}x_{j}x_{j+1}~\mathrm{mod}~2 ,
    \label{eq:c6_boolean}
\end{equation}
where $\mathrm{supp}(\mathbf{s}) = \{0\leq j\leq 5 \mid s_j\neq (0,0)\}$ and all subscripts are taken $\mathrm{mod}~6$.
\end{def:C_6_Stabilizer_Game}

In a perfect quantum strategy for this game, each party holds one qubit from the state $|C_6\rangle$ and measures the Pauli operator $E_j(a_j,b_j) = i^{a_jb_j} X_j^{a_j} Z_j^{b_j}$. 
This strategy is perfect because, collectively, the parties measure the Pauli part of the stabilizer $S_\mathbf{x} = \prod_{j=0}^5 S_j^{x_j}$, where $S_j$ denotes the stabilizer generator corresponding to vertex $j$ of $C_6$.  The parity of the measurement outcomes for qubits $j\in\mathrm{supp}(\mathbf{s})$ is then deterministically equal to the phase in front of $S_\mathbf{x}$, which is $\exp\l(i\pi\sum_{j=0}^5 x_{j-1}x_jx_{j+1}\r)$.  See Fig.~\ref{fig:6qubit} and Appendix~\ref{app:CBFn} for more details.  On the other hand, classical strategies perform with varying success depending on the particular input set $\mathcal{I}$.  We now consider two particular input sets that give two different classical bounds on the depth-0 success probability.

\subsection{\label{sec:C6_stab_game}Full-input cubic Boolean function game (fidelity game)}

Setting the input set $\mathcal{I}$ to be all possible six-bit strings, we have the game $\mathrm{CBF}(C_6,\{0,1\}^6)$. In the perfect quantum strategy presented above, each of the $2^6$ inputs corresponds to a measurement of one element of the stabilizer group for the graph state $|C_6\rangle$.  Playing many rounds of the game can be interpreted as a partial-tomography experiment to determine the fidelity of the prepared state by measuring randomly chosen stabilizer elements~\cite{flammia2011direct,dasilva2011practical}. The success probability $\Pr_Q[\mathrm{win}]$ for the quantum strategy using $\rho$ is related to the fidelity of $\rho$ with respect to $|C_6\rangle$, via
\begin{equation}
\mathcal{F}(\rho,|C_6\rangle) = 2\Pr\nolimits_Q[\mathrm{win}] - 1.
\end{equation}

For this specific input set, the depth-0 classical circuits cannot win this game with probability greater than $\Pr_C[\mathrm{win}]\leq 23/32$.  This comes from the local hidden-variable theory upper bound presented in Ref.~\cite{guhne2005bell} for the corresponding graph-state Bell inequality.  However, depth-1 circuits can implement perfect strategies. This is achieved by each party individually computing one cubic term in the function of Eq.~\eqref{eq:c6_boolean} as $y_j = a_{j-1}a_j(b_j + a_{j-1}) = x_{j-1} x_{j} x_{j+1}~\mathrm{mod}~2$, which satisfies the win conditions of the game. Hence,
\begin{align}
    \Pr\nolimits_C[\mathrm{win}]\leq_{\textrm{Depth-0}} \frac{23}{32} \leq_{\textrm{Depth-1}} 1 \leq_{\textrm{Depth-2}} 1.
\end{align}
We remark that because depth-1 classical circuits can give a perfect strategy, this Bell-type nonlocal game is not sufficient to demonstrate quantum advantage in terms of a separation in circuit depth.  Nevertheless, we use this game to estimate the fidelity of our experimentally prepared state in Sec.~\ref{sec:exp_results}.

\subsection{\label{sec:Mermin_game}Restricted-input cubic Boolean function game}

Following the seminal work of Mermin \cite{mermin1990extreme}, in Ref.~\cite{cabello2008mermin} it was shown that by restricting to a 55-element subset of $\{0,1\}^{6}$ gives the largest possible violation for any Bell inequality based on the perfect correlations present in $|C_6\rangle$. Consider the the 55-element set
\begin{equation}\label{eq:Mermin}
\mathcal{I}_{\mathrm{Mermin}}^{(55)} = \l\{\mathbf{x}\in\{0,1\}^6 \middle\vert
\begin{array}{l}
|\mathbf{x}| \neq 0, 1, \\
~\mathbf{x} \neq 010101, 101010
\end{array}
\r\},
\end{equation}
where $|\mathbf{x}| = \sum_{j=0}^5 x_j$ denotes the Hamming weight of $\mathbf{x}$ (i.e., the number of ones in the binary string). This input set yields the game $\mathrm{CBF}(C_6,\mathcal{I}_{\mathrm{Mermin}}^{(55)})$.

The same quantum strategy presented for $\mathrm{CBF}(C_6,\{0,1\}^6)$ is also perfect for this game. For the classical bounds, depth-0 classical circuits cannot win with probability greater than $37/55$, although the same depth-1 circuit presented in Sec.~\ref{sec:C6_stab_game} remains perfect. Hence,
\begin{align}
    \Pr\nolimits_C[\mathrm{win}]\leq_{\textrm{Depth-0}} \frac{37}{55} \leq_{\textrm{Depth-1}} 1 \leq_{\textrm{Depth-2}} 1.
\end{align}
Again, the trivial depth-1 bound indicates that this Bell-type nonlocal game is not sufficient to demonstrate quantum advantage in terms of a separation in circuit depth.

\section{$C_6$ stabilizer submeasurement games}
\label{sec:ss_games}

We now move on to another class of games called stabilizer submeasurement games.  For the sake of our experimental demonstration, we first discuss this game for a fixed-sized input of six bits.  Unlike the CBF games, no depth-1 classical strategy is perfect for this six-bit instance.  This property begins to reveal an important feature of this game for general $ n $-bit instances. Namely, no strategy produced by a geometrically-restricted classical circuits with depth growing sublinearly in $n$ is perfect. We make this statement precise in Theorem~\ref{thm:n_qubit_ss_games}.

Let us first see how this property manifests for the six-bit input.
These games are defined as follows.
\begin{def:Stab_Sub_Game}[$\mathrm{SS}(C_6,\mathcal{I})$]
Consider a game where six parties corresponding to the vertices of the cycle graph $C_6$ are each given a one-bit input $x_j$, $j=0,\ldots,5$, from a global input $\mathbf{x}=(x_0,\ldots,x_5)\in\mathcal{I}\subseteq\{0,1\}^6$, drawn at random from a uniform distribution. 
With each input we can associate a global Pauli operator $E(\mathbf{1},\mathbf{x}) = \prod_{j=0}^{5} i^{x_j} X_j Z_j^{x_j}$, where $\mathbf{1}$ denotes the all-ones string. Each party then produces an output $y_j\in\{0,1\}$, which collectively forms a string $\mathbf{y}\in\{0,1\}^6$.  The parties are said to win the game whenever
\begin{subequations}\label{eq:SS_win_conditions_6}
\begin{align}
&\forall P\subseteq E(\mathbf{1},\mathbf{x}) ~\mathrm{s.t.}~P\in\mathcal{S}_{C_6}~\sum_{j\in\mathrm{supp}(P)} y_j = 0, \\
&\forall P\subseteq E(\mathbf{1},\mathbf{x}) ~\mathrm{s.t.}~P\in-\mathcal{S}_{C_6}~\sum_{j\in\mathrm{supp}(P)} y_j = 1 ,
\end{align}
\end{subequations}
where $P\subseteq E(\mathbf{1},\mathbf{x})$ means that $P$ is a Pauli operator obtained by replacing some nonidentity tensor factors in $E(\mathbf{1},\mathbf{x})$ with the identity.  Furthermore, $\mathrm{supp}(P) = \mathrm{supp}(\mathbf{a},\mathbf{b})$ where $P = E(\mathbf{a},\mathbf{b})$ (i.e., $\mathrm{supp}(P)$ indexes the qubits where $P$ acts nontrivially) and all arithmetic is performed modulo 2.
\end{def:Stab_Sub_Game}
The $\mathrm{SS}(C_6,\mathcal{I})$ games always have a perfect quantum strategy wherein each party $j$ holds one qubit from the state $|C_6\rangle$, measures it in the basis of $X_j$ or $Y_j$ if $x_j=0 \text{ or } 1$, respectively, and then outputs the measurement outcome. For more details on the measurement outcomes expected when a stabilizer element exists as a submeasurement of a global Pauli measurement, see Appendix~\ref{sec:stab_sub_details}.  On the other hand, classical strategies perform with varying success depending on the particular input set $\mathcal{I}$.

\subsection{\label{sec:SS8}2D hidden linear function (HLF) game} 

Fixing the input set to be
\begin{equation}\label{eq:HLF_8}
\mathcal{I}_{\mathrm{HLF}}^{(8)} = \l\{\mathbf{x}\in\{0,1\}^6 \middle\vert
\begin{array}{l}
(x_0, x_{2}, x_{4}) \in \{0,1\}^3, \\
x_j = 0 \text{ for all other } j
\end{array}
\r\},
\end{equation}
we have the game $\mathrm{SS}(C_6,\mathcal{I}_{\mathrm{HLF}}^{(8)})$.  This game is equivalent to the smallest instance of the 2D hidden linear function game introduced in Ref.~\cite{bravyi2018quantum} that was used to demonstrate an unconditional exponential separation between the power of classical and quantum circuits.

As shown in Appendix~\ref{sec:Triangle_Game_Bounds}, all geometrically restricted depth-0 and depth-1 fan-in~$\leq 3$ circuits cannot win this game on more than $7/8$ inputs; however, there is a depth-2 circuit that wins on all inputs. Hence,
\begin{align}
    \Pr\nolimits_C[\mathrm{win}]\leq_{\textrm{Depth-0}} \frac{7}{8} \leq_{\textrm{Depth-1}} \frac{7}{8} \leq_{\textrm{Depth-2}} 1.
\end{align}
Thus, for the current 6-qubit example, the {constant-depth} quantum circuit in Fig.~\ref{fig:6qubit} can in principle achieve the higher success probability in comparison to any geometrically restricted classical circuit  {of depth 1}. This nontrivial classical depth-1 bound is the first of many nontrivial bounds we can put on classical circuits with increasing depth as we scale the size of the input. This fact makes this game lucrative for demonstrating quantum advantage in terms of a separation in circuit depth as we modestly scale the system size.   {We will further explore this point in Sec.~\ref{sec:generalization}.}

\subsection{\label{sec:SS5}Restricted-input 2D hidden linear function (HLF) game }

Analogous to the case for the CBF game, we ask whether the quantum violation of the classical success probability bound can be increased by further restricting the input set. 
Defining the input set
\begin{equation}\label{eq:HLF_5}
\mathcal{I}_{\mathrm{HLF}}^{(5)} = \l\{\mathbf{x}\in\{0,1\}^6 \middle\vert
\begin{array}{l}
(x_0, x_{2}, x_{4}) \in \mathcal{V}, \\
x_j = 0 \text{ for all other } j
\end{array}
\r\},
\end{equation}
where $\mathcal{V} \subset \{0,1\}^3$ is defined as
\begin{equation}
    \mathcal{V} = \{(0,0,0), (0,1,1), (1,0,1), (1,1,0), (1,1,1)\},
\end{equation}
we obtain the game $\mathrm{SS}(C_6,\mathcal{I}_{\mathrm{HLF}}^{(5)})$.  Analogous to how a restricted input set makes the game $\mathrm{CBF}(C_6,\mathcal{I}_\mathrm{Mermin}^{(55)})$ classically harder than $\mathrm{CBF}(C_6,\{0,1\}^6)$ (i.e., the depth-0 lower bound on $\Pr_C[\mathrm{win}]$ decreases), the same behavior occurs when restricting from $\mathcal{I}_{\mathrm{HLF}}^{(8)}$ to $\mathcal{I}_{\mathrm{HLF}}^{(5)}$.  As shown in Appendix~\ref{sec:Triangle_Game_Bounds}, the depth-0 and depth-1 classical bounds are reduced to
\begin{align}
    \Pr\nolimits_C[\mathrm{win}]\leq_{\textrm{Depth-0}} \frac{4}{5} \leq_{\textrm{Depth-1}} \frac{4}{5} \leq_{\textrm{Depth-2}} 1.
\end{align}
We remark that this game improves on the previously known bound on the success probability~\cite{bravyi2018quantum}, leaving more room for noise in the quantum strategy.



\section{\label{sec:exp_results}Experimental results}

\subsection{\label{sec:exp_setup}Experimental setup}

The experiments presented here were performed on a fully programmable trapped-ion quantum computer~\cite{Debnath2016p63}. The apparatus is based on a linear chain of $^{171}$Yb$^{+}$ ions confined in a Paul trap, with each qubit encoded in two hyperfine states of the $^{2}S_{1/2}$ ground-state manifold. Prior to implementing a quantum circuit, the ions are ground-state cooled and initialized to the $\vert 0 \rangle$ state with optical pumping~\cite{Olmschenk2007p052314}.

Our device has a universal gate set consisting of two classes of quantum operations:~single-qubit rotations and two-qubit entangling interactions ($R_{XX}$ gates). These operations are achieved by applying two counter-propagating optical Raman beams derived from a pulsed 355-nm mode-locked laser~\cite{Islam2014p3238}. One Raman beam is a global beam applied to the entire chain, while the other is split into an array of individual addressing beams, each of which can be controlled independently and targets a single qubit. Single-qubit rotations around the $z$-axis are achieved by phase advances on the classical control signals, while single-qubit rotations around axes in the $xy$-plane are realized by driving resonant Rabi rotations of defined phase, amplitude, and duration. Two-qubit gates are implemented by illuminating two selected ions with frequencies near the motional sidebands, creating an effective Ising spin--spin interaction via transient entanglement between the two qubits and the motion in the trap~\cite{Molmer1999p1835,Solano1999pR2539}. We use multi-frequency pulses to ensure the qubit states are disentangled from the motional modes at the end of the gate~\cite{Blumel2021efficient}. The angle of rotation for both the single-qubit gates and $R_{XX}$ gates can be varied continuously.

\begin{figure*}
    \centering
    \includegraphics[width=0.95\textwidth]{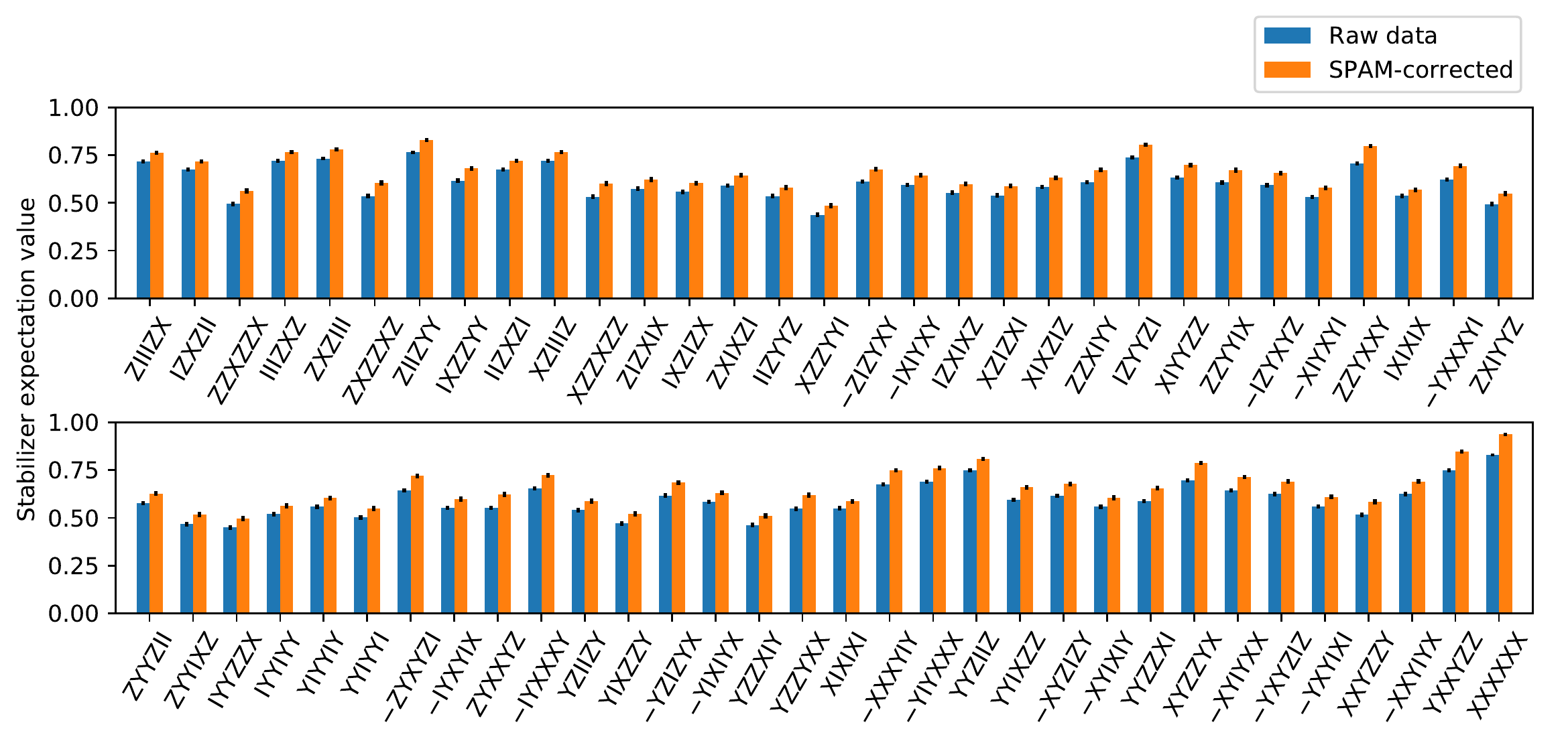}
    \caption{Estimated $|C_6\rangle$ stabilizer expectation values from our ion-trap experiment. This data allows us to evaluate the fidelity of our prepared state [Eq.~\eqref{eq:C6_fidelity}] and win probabilities for the cubic Boolean function games $\mathrm{CBF}(C_6,\{0,1\}^6)$ and $\mathrm{CBF}(C_6,\mathcal{I}_{\mathrm{Mermin}}^{(55)})$ [Eqs.~\eqref{eq:C6_win_prob} and \eqref{eq:mermin_win_prob}, respectively]. The identity stabilizer is not included, as it trivially satisfies $ \tr(\rho\openone) = 1 $. Each expectation value was estimated from $ N = 5000 $ shots, from which error bars of $ 1\sigma $ statistical uncertainties are reported. For explicit numerical values, as well as the grouping of simultaneously measured stabilizers that we employed, see Table~\ref{tab:stabilizer_exp_values} in Appendix~\ref{sec:additional_data}.}
    \label{fig:stabilizer_values_bar_graph}
\end{figure*}

Typical single- and two-qubit gate fidelities are $99.0(5)\%$ and $98.5(5)\%$, respectively. The latter is limited by residual entanglement of the qubit states and the motional state of the ions due to intensity noise and motional heating. Immediately before running one of the experiments, a lower bound $A$ on the infidelity of every two-qubit gate is estimated to ensure successful calibration. For each pair of qubits, this quantity is determined by applying a (noisy) $R_{XX}$ gate to the computational basis state $|00\rangle$ and measuring the odd-parity-state population $A=\bra{01}\rho\ket{01}+\bra{10}\rho\ket{10}$ of the resulting two-qubit state $\rho$~\cite{figgatt2019parallel}. 

The implementation of a circuit consists of sequences of gate operations between qubits, compiled into the native single- and two-qubit gate operations. After executing the quantum circuit, the state of each ion is read out in parallel using state-dependent fluorescence detection~\cite{Olmschenk2007p052314}. State-preparation-and-measurement (SPAM) errors are characterized and corrected by applying the inverse of an independently measured state-to-state error matrix~\cite{Shen2012p053053}. Detailed performance of the system has been described elsewhere~\cite{Debnath2016p63,Landsman2019p61}.

\subsection{Full-input cubic Boolean function game (fidelity game)}

As discussed in Sec.~\ref{sec:C6_stab_game}, the fidelity $ \fid(\rho, \ket{C_6}) $ indicates the probability of winning the game $\mathrm{CBF}(C_6,\{0,1\}^6) $, using $ \rho $ as the quantum resource. We therefore directly estimate this fidelity by performing tomography over all $ 2^6 - 1 = 63 $ nontrivial stabilizers. A straightforward implementation is to measure the 63 stabilizers one-by-one, with an equal number of samples each. Indeed, this would be essentially equivalent to known probabilistic protocols for estimating state fidelities~\cite{flammia2011direct,dasilva2011practical}, which in this context use uniformly randomized measurement settings over all stabilizers. In order to obtain an estimate of the fidelity within precision $\pm\epsilon$, with success probability $\geq 1-\delta$, such methods require up to $ \lceil 8 \ln(4/\delta) / \epsilon^2 \rceil $ measurements. This is notably independent of system size, despite the fact that the total number of stabilizers grows exponentially with $ n $. However, since our six-qubit system is fairly small, this randomization turns out to incur a larger constant overhead compared to a more direct approach.

Therefore we instead use a greedy graph-coloring heuristic to find locally commuting subsets of the stabilizers~\cite{verteletskyi2020measurement,jena2019pauli}. Two Pauli operators $ \bigotimes_{j=0}^{n-1} P_j, \bigotimes_{j=0}^{n-1} Q_j \in \{I, X, Y, Z\}^{\otimes n} $ are said to locally commute if $ [P_j, Q_j] = 0 $ for all $ j $. This allows us to parallelize the stabilizer measurements into fewer measurement settings, reducing the overall sampling runtime. For this experiment, we used a total of 37 global Pauli measurement settings to estimate all 63 nontrivial stabilizer expectation values (see Table~\ref{tab:stabilizer_exp_values} in Appendix~\ref{sec:additional_data} for details of this partitioning). Each measurement setting was sampled $ 5000 $ times to obtain adequate statistics.

The stabilizer expectation values are visualized in Fig.~\ref{fig:stabilizer_values_bar_graph}, with $ 1\sigma $ uncertainty bars associated with the sampling statistics. Explicit numerical values are provided in Table~\ref{tab:stabilizer_exp_values} in the Appendix. We include the values from both the raw experimental data, as well as after applying SPAM correction to the outcome distributions. From these results we estimate the fidelity of our prepared state $ \rho $ to be
\begin{equation}\label{eq:C6_fidelity}
	\widehat{\fid}(\rho, \ket{C_6}) =
	\begin{cases}
		0.6061(17) & \text{from raw data}\\
		0.6639(19) & \text{SPAM-corrected}.
	\end{cases}
\end{equation}
The statistical uncertainty is computed by propagating uncertainties from the stabilizer expectation values, paying special care to handle the covariances between stabilizers that are simultaneously evaluated within the same Pauli measurement setting. Details of this error analysis are provided in Appendix~\ref{sec:uncert_analysis}.

To put this result in context, we briefly review prior experimental work on graph states over six qubits. In Ref.~\cite{lu2007experimental}, an H-graph state was prepared with fidelity $0.593(25)$ on a photonic device. More recently, Ref.~\cite{bravyi2020mitigating} prepared a six-qubit path-graph state with fidelity just under $0.6$ on a superconducting architecture;~using the error-mitigation techniques also developed in that work, they then demonstrated an error-mitigated fidelity above $0.8$. In both works, the underlying graphs required only five two-qubit entangling gates, whereas our $\ket{C_6}$ state requires six such gates.

As discussed in Sec.~\ref{sec:fid_and_gme}, we can also use the fidelity of $ \rho $ to compute an estimate for the genuine-multipartite-entanglement witness $ \mathcal{W} $ via Eq.~\eqref{eq:Witness_from_Fidelity}. From the data, we obtain the estimate
\begin{align}
    \widehat{\tr(\rho\mathcal{W})} = \begin{cases}
		-0.1061(17) & \text{from raw data}\\
		-0.1639(19) & \text{SPAM-corrected}
	\end{cases}.
\end{align}
The value of the witness is negative, which indicates the presence of genuine multipartite entanglement in our experimentally prepared state. In other words, we guarantee that $\rho$ is entangled across all possible bipartitions of our six-qubit system.

Finally, using the relation $ \fid(\rho, \ket{C_6}) = 2 \Pr_Q[\text{win}] - 1 $ for the CBF games, we obtain a raw success probability for winning this full-input game with our experimentally prepared state $\rho$ as
\begin{equation}\label{eq:C6_win_prob}
    \widehat{\Pr}_Q[\text{win}] = 80.30(8) \%. \quad [\mathrm{CBF}(C_6,\{0,1\}^6) \text{ game}]
\end{equation}
This value significantly surpasses the classical bound of $ 23/32 \approx 72\% $.

Note that for our experimental win probabilities, we report only the values obtained from the raw data. We argue that any conclusions drawn from SPAM-corrected data in this context may be controversial. This is because nonlocal games are evaluated on a shot-to-shot basis:~a given input--output pair either does or does not satisfy the requisite win conditions. Equivalently, from a computational perspective we require that the quantum computer evaluate the correct value of the cubic Boolean function given a single input string. The final success rate that we quote is merely an average over many individual instances. On the other hand, SPAM-error correction is a postprocessing technique that is necessarily applied to the global distribution observed over many shots. This technique allows us to separate the fidelity of our cluster-state preparation from extraneous device errors such as detector readout noise. For this reason we have included SPAM-corrected estimates for the fidelity and entanglement witness, as they illustrate the quality of the entangling circuit by itself before further corruption by measurement noise. However, when playing a nonlocal game, players are ultimately restricted to claiming either a win or loss at each shot, so measurement errors are inevitable in this context.

\subsection{Restricted-input cubic Boolean function game}

Using the same experimental data, we can also estimate our device's average success probability for the game $\textrm{CBF}(C_6,\mathcal{I}_{\mathrm{Mermin}}^{(55)})$ discussed in Sec.~\ref{sec:Mermin_game}. This game was also recently implemented in Ref.~\cite{sheffer2021playing} using IonQ's cloud-accessible quantum computer, where a winning probability of $87(1)\%$ was reported.
Using the data, we estimate
\begin{equation}\label{eq:mermin_win_prob}
	\widehat{\Pr}_Q[\mathrm{win}] = 79.51(9)\%, \  [\mathrm{CBF}(C_6,\mathcal{I}_{\mathrm{Mermin}}^{(55)}) \text{ game}]
\end{equation}
far surpassing the classical bound of $37/55 \approx 67\%$. We remind the reader that this result corresponds to the raw output of our quantum device.

\subsection{2D hidden linear function game}

\begin{table}
\caption{Experimental results for the stabilizer submeasurement games, $\mathrm{SS}(C_6,\mathcal{I}_{\mathrm{HLF}}^{(8)})$ and $\mathrm{SS}(C_6,\mathcal{I}_{\mathrm{HLF}}^{(5)})$. The rules and winning conditions of the game are described in Sec.~\ref{sec:ss_games}. For each input, $ 5000 $ rounds of the game were played, from which we estimate a win rate and a $ 1\sigma $ statistical uncertainty. For completeness we include the SPAM-corrected probabilities, although as argued in the main text, the proper figure of merit for claiming quantum computational advantage here (or lack thereof) is the raw value.}
\label{tab:triangle_game_results}
\begin{tabular*}{\linewidth}{c @{\extracolsep{\fill}} c c c}
    \toprule
     & \multicolumn{2}{c}{Win rate (\%)} &  Classical \\
    \cline{2-3}
    Input & Raw value & SPAM-corrected & bound (\%) \\
    \hline
    $ 000000 $ & $ 80.42(56) $ & $ 86.45(48) $ & - \\
    $ 000010 $ & $ 84.76(51) $ & $ 87.85(46) $ & - \\
    $ 001000 $ & $ 83.92(52) $ & $ 86.94(48) $ & - \\
    $ 001010 $ & $ 75.58(61) $ & $ 81.12(55) $ & - \\
    $ 100000 $ & $ 85.04(50) $ & $ 88.11(46) $ & - \\
    $ 100010 $ & $ 78.12(58) $ & $ 83.98(52) $ & - \\
    $ 101000 $ & $ 78.46(58) $ & $ 84.28(51) $ & - \\
    $ 101010 $ & $ 84.52(51) $ & $ 87.56(47) $ & - \\
    \hline
    $\mathcal{I}_{\mathrm{HLF}}^{(8)}$ & $ 81.35(59) $ & $ 85.79(53) $ & $ 87.5 $ \\
    $\mathcal{I}_{\mathrm{HLF}}^{(5)}$ & $ 79.42(25) $ & $ 84.68(23) $ & $ 80 $ \\
    \botrule
\end{tabular*}
\end{table}

As another benchmark of our device, we performed the game $\mathrm{SS}(C_6,\mathcal{I}_{\mathrm{HLF}}^{(8)})$, described in Sec.~\ref{sec:SS8}, which is the smallest example of the so-called 2D hidden linear function game of Ref.~\cite{bravyi2018quantum}. This game was also recently implemented in Ref.~\cite{sheffer2021playing} on IBM, IonQ, and Honeywell's cloud-accessible quantum computers. Their highest reported success probability was $ 85(1)\% $, using Honeywell's H0 trapped-ion device, which does not surpass the depth-1 classical bound of $ 87.5\% $.

In this experiment, we played $ 5000 $ rounds of the game per input to determine the average success probability for our device.  The results for each input are presented in Table~\ref{tab:triangle_game_results}. The estimated average success probability is
\begin{align}
	\widehat{\Pr}_Q[\mathrm{win}] &= 81.35(59)\%, \quad [\mathrm{SS}(C_6,\mathcal{I}_{\mathrm{HLF}}^{(8)}) \text{ game}] \label{eq:8-input_ss}
\end{align}
which is below the depth-1 classical bound.

\subsection{Restricted-input 2D hidden linear function game}

Using the same experimental data from Table~\ref{tab:triangle_game_results}, we also estimate our device's average success probability for the game $\mathrm{SS}(C_6,\mathcal{I}_{\mathrm{HLF}}^{(5)})$, described in Sec.~\ref{sec:SS5}. The estimated average success probability in this case is
\begin{align}
    \widehat{\Pr}_Q[\mathrm{win}] &= 79.42(25)\%. \quad [\mathrm{SS}(C_6,\mathcal{I}_{\mathrm{HLF}}^{(5)}) \text{ game}], \label{eq:5-input_ss}
\end{align}
which is very close to meeting the depth-1 classical bound of $80\%$.

\subsection{\label{sec:device_char}Characterization of device noise}

\begin{figure*}
    \centering
    \includegraphics[width=\linewidth]{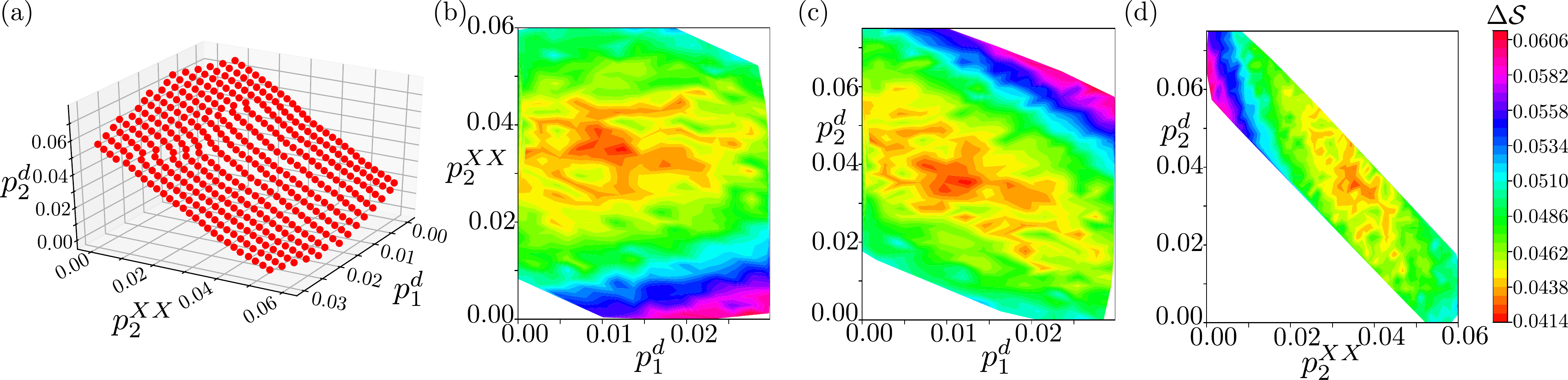}
    \caption{Characterization of noise parameters of the device, using the SPAM-corrected experimental data. The variational parameters of our noise model, described in Sec.~\ref{sec:device_char}, control the strength of single-qubit depolarizing channels (noise rate $p_1^d$), two-qubit depolarizing channels ($p_2^d$), and two-qubit joint bit-flip channels ($p_2^{XX}$). 
    (a)~The surface of $\Delta \fid = 0$, as defined in Eq.~\eqref{eq:Delta_Fid}. (b)--(d)~Contour plots of $\Delta\mathcal{S}$ over that surface, interpolated over a finer mesh. To visualize this surface, we project onto either the (b)~$(p_1^d,p_2^{XX})$ plane, (c)~$(p_1^d,p_2^d)$ plane, or (d)~$(p_2^{XX},p_2^d)$ plane. The minimum value for $\Delta \mathcal{S}$ is obtained at $(p_1^d,p_2^{XX},p_2^d) = (1.2\%,3.5\%,3.5\%)$, corresponding to the best-fit parameters for this noise model of our device.}
    \label{fig:Countour_simulation_values}
\end{figure*}

In this section we present a method for characterizing the various types of noise in our trapped-ion device based on the SPAM-corrected stabilizer information (cf.~Fig.~\ref{fig:stabilizer_values_bar_graph}) and classical simulations. Since the experiments we implemented involved only six qubits, a single instance of the experiment with a particular set of parameters can be simulated on a laptop in a few minutes, 
making the following method tractable for small system sizes. We simulate noisy quantum circuits with a physically motivated error model over many different noise rates and determine the values which best fit the experimental data.

\subsubsection{Error model}

The error model we chose consists of three noise channels with variable error rates, and two channels with fixed error rates that were determined from device calibration beforehand.  The variable noise channels consist of a single-qubit depolarizing error after every single-qubit gate, and both a two-qubit depolarizing error and a stochastic joint bit-flip error after every two-qubit $R_{XX}$ gate. The corresponding error channels (acting on qubits $j, k$) are defined as
\begin{equation}
    \mathcal{E}_{1d}^{(j)}(\rho) = (1-p_1^d) \rho + \frac{p_1^d}{3}(X_j\rho X_j + Y_j\rho Y_j + Z_j \rho Z_j),\label{eq:1d_error}\\
\end{equation}
\begin{equation}
    \mathcal{E}_{2d}^{(j,k)}(\rho) = (1-p_2^d) \rho + \frac{p_2^d}{15}\sum_{ (P,Q)\in\mathcal{K}_2^\mathrm{dep}   } P_j Q_k \rho Q_k P_j,\label{eq:2d_error}
\end{equation}
\begin{equation}
    \mathcal{E}_{XX}^{(j,k)}(\rho) = (1-p_2^{XX})\rho + p_2^{XX} X_j X_k \rho X_k X_j,\label{eq:XX_error}
\end{equation}
where $\mathcal{K}_2^\mathrm{dep} = \{I,X,Y,Z\}^2\setminus\{(I,I)\}$  and $p_1^d$, $p_2^d$, and $p_2^{XX}$ are the characteristic noise parameters, respectively. The use of two types of two-qubit gate errors is motivated by the physics of the ion-trap device. The $XX$ error channel can be interpreted as the ensemble average of random, normally distributed overrotations of the $R_{XX}$ gate (with variance related to $p_2^{XX}$)~\cite{debroy2020logical}. However this type of error preserves two-qubit parity, so we additionally include the depolarizing error channel as an effective model of more general noise features.

For the fixed error channels, we assume that a single-qubit dephasing error occurs whenever a qubit idles in the circuit, and a two-qubit crosstalk error occurs between ion pairs that are unintentionally coupled when we apply an $R_{XX}$ gate~\cite{debroy2020logical}.  Idling errors are modeled by the channel
\begin{equation}
    \mathcal{E}_{\mathrm{idle}}^{(j)}(\rho) = (1-p_{\mathrm{idle}}) \rho + p_{\mathrm{idle}} Z_j \rho Z_j,
\end{equation}
where the error rate depends on the gate time $t$ as $p_{\mathrm{idle}} = (1-e^{t/2T_2})/2$. Based on prior calibration of our device, we estimate $T_2 = \SI{200}{\milli\second}$. The single- and two-qubit gate times are $t_1 = \SI{10}{\micro\second}$ and $t_2 = \SI{350}{\micro\second}$, respectively. Crosstalk errors are modeled with $XX$ errors [see Eq.~\eqref{eq:XX_error}] between each pair of unintentionally coupled ions with error rate $p_c = 0.06\%$, as determined by the crosstalk Rabi ratio $\Omega_c / \Omega_R = 3\%$ of our device.  Measurement noise is set to zero, as we compare the outcome of the classical simulation to the SPAM-corrected data, which ideally corrects for all measurement errors.  We use Qiskit~\cite{Qiskit-Textbook} to simulate the noisy circuit using the quantum-trajectories method, i.e., by stochastically applying Kraus operators with their corresponding probabilities at the appropriate steps in the circuit. We take $10^4$ shots per measurement setting for each simulation point, so that sampling errors are negligible.

\subsubsection{Identifying noise parameters of best fit}

Given the noise model, we classically simulate the noisy circuit for a range of noise parameters $(p_1^d,p_2^{XX},p_2^d)$ and find the values that most closely approximate the experiment by analyzing two figures of merit. The first is the absolute difference between the fidelity value obtained from the experiment, $\widehat{{\fid}}_\mathrm{exp}$, and the fidelity value obtained from classical simulation, $\fid_\mathrm{sim}$, 
\begin{equation}
\label{eq:Delta_Fid}
\Delta { \fid} = |\widehat{{\fid}}_\mathrm{exp}-\fid_\mathrm{sim}|=\frac{1}{64}\l\lvert \sum_{S\in\mathcal{S}_{C_6}} \ev{S}_\mathrm{exp}-\ev{S}_\mathrm{sim} \r\rvert.
\end{equation}
Here, $\mathcal{S}_{C_6}$ is the stabilizer group of $\ket{C_6}$, and $\ev{S}_\mathrm{exp}$ and $\ev{S}_\mathrm{sim}$ denote the expectation value of stabilizer $S$ calculated from the experimental  and simulated data, respectively. The second metric is the average absolute difference of stabilizer values,
\begin{equation}
\label{eq:Delta_Stab}
\Delta \mathcal{S} = \frac{1}{64}\sum_{S\in\mathcal{S}_{C_6}} |\ev{S}_\text{exp}-\ev{S}_\text{sim}|.
\end{equation}
If our simulation exactly mirrors the experiment, then $\Delta \fid = \Delta \mathcal{S}=0 $.

For our task of fitting the noise parameters, we employ both metrics. The weaker $\Delta\fid=0$ condition works at the level of coarse-graining. We simulate the experiment over a grid of physically reasonable $(p_1^d,p_2^{XX},p_2^d)$ values and interpolate to create a 3D density plot for $\Delta \fid(p_1^d,p_2^{XX},p_2^d)$. From this, we find the surface corresponding to $\Delta \fid = 0$, shown in Fig.~\ref{fig:Countour_simulation_values}(a). We then employ $\Delta \mathcal{S}$ for a more fine-grained analysis. Simulating the experiment over a mesh of $(p_1^d,p_2^{XX},p_2^d)$ values on the $\Delta\fid = 0$ surface, we can identify the region where $\Delta \mathcal{S}$ is minimized.
To better visualize the behaviour of $\Delta \mathcal{S}$, we interpolate the simulated data points and project the surface onto one of the three planes formed by the $p_1^d,p_2^{XX},p_2^d$ axes, as shown in Fig.~\ref{fig:Countour_simulation_values}(b)--(d). From these figures, we estimate the best-fit parameters to be $(p_1^d,p_2^{XX},p_2^d) = (1.2\%,3.5\%,3.5\%)$. It can be shown that the gate infidelities~\cite[Sec.~9.3]{nielsen_chuang_2010} of our three error channels [Eqs.~(\ref{eq:1d_error}--\ref{eq:XX_error})] are $p_1^d$, $p_2^{XX}$, and $\frac{4}{5}p_2^{d}$ respectively, corresponding to an overall two-qubit gate infidelity of $p_2^{XX}+\frac{4}{5}p_2^{d}=6.3\%$.

To put this result in context, consider the calibration procedure which produces a lower bound $A$ on the two-qubit gate infidelity (described in Sec.~\ref{sec:exp_setup}). For this experiment, we had measured an average value of $A$ over the six $R_{XX}$ gates to be 1.8\%. This coincides well with our characterization, as a depolarizing channel with strength $p_2^d=3.5\%$ would result in $A = 1.9\%$. Note that the $XX$ errors of our model cannot affect the value of $A$, as they cannot change the parity of computational basis states. 
Hence, the two-qubit depolarizing noise can account for the calibrated value of $A$ and the infidelity of a standalone $R_{XX}$ gate. 
However, our results from implementing the full circuit of Fig.~\ref{fig:6qubit} indicate a degradation of gate fidelities relative to the calibrated values. The $XX$ channel of our error model is included to account for this additional effect, which one may interpret as random overrotations due to laser-intensity fluctuations. 

We also repeated this analysis using purely depolarizing noise (i.e., fixing $p_2^{XX}=0$), obtaining a best-fit value of $\Delta \mathcal{S} = 0.05$. This is larger than the value of $\Delta\mathcal{S} = 0.04$ obtained in Fig.~\ref{fig:Countour_simulation_values}, indicating that, unsurprisingly, our device-specific noise model works better than a device-agnostic depolarizing model to reproduce the experimental data.



\section{\label{sec:generalization}$n$-qubit generalization}

While our experimental results were limited to a six-qubit system, here we discuss quantum and classical strategies for the two types of games with an arbitrary sized input.   {In this case,} the the cubic Boolean function games always have a perfect depth-1 classical strategy, whereas geometrically-restricted classical circuits of depth scaling sublinearly with the input size fail to win the stabilizer submeasurement games with probability greater than 80\%.  We make the latter statement precise in Theorem~\ref{thm:n_qubit_ss_games}.


\subsection{$C_n$ cubic Boolean function games}
The premise and win conditions $C_6$ cubic Boolean function games introduced in Sec.~\ref{sec:CBF_Games} generalize to the $n$-qubit scenario as follows.

\begin{def:C_n_Stabilizer_Game}[$\mathrm{CBF}(C_n,\mathcal{I})$]
Consider a game where $n$ parties corresponding to the vertices of the cycle graph $C_n$ are each given a two-bit input $s_j = (a_j,b_j)\in\{0,1\}^2$, $j=0,\ldots,n-1$. The inputs $s_j$ are drawn as follows:~from a global input $\mathbf{x}= (x_0,\ldots,x_{n-1})\in\mathcal{I}\subseteq\{0,1\}^n$, each party $j$ is given the bit $a_j=x_j$ and the parity of their neighbors' bits, $b_j=x_{j+1} + x_{j-1}~\mathrm{mod}~2$.  Each party $j$ then produces a one-bit output $y_j$.  The parties win the game whenever
\begin{align}
    \sum_{j\in\mathrm{supp}(\mathbf{s})} y_j = \sum_{j=0}^{n-1}x_{j-1}x_{j}x_{j+1} ~\mathrm{mod}~2 ,
    \label{eq:c6_boolean_n}
\end{align}
where $\mathrm{supp}(\mathbf{s}) = \{0\leq j\leq {n-1} \mid s_j\neq (0,0)\}$ and all subscripts are taken $\mathrm{mod}~n$.
\end{def:C_n_Stabilizer_Game}

As shown in Appendix~\ref{app:CBFn}, the structure of the signs appearing in the stabilizer group of the $n$-qubit cycle-graph state $|C_n\rangle$ always has the form of a cubic Boolean function $g_n(\mathbf{x}) = \sum_{j=0}^{n-1} x_{j-1}x_jx_{j+1}$.  Thus the quantum strategy where each party $j$ measures the Pauli observable $E_j(a_j,b_j)$ is always perfect.  As for classical strategies, the game $\mathrm{CBF}(C_n,\{0,1\}^n)$ for $n$ parties always exhibits a depth-0 bound $\beta_0<1$ (cf.~Theorem 1 of Ref.~\cite{guhne2005bell}).  However, the exact value of $\beta_0$ for $n>10$ is unknown and cannot be obtained efficiently numerically. 
A general discussion of cubic Boolean function games defined on arbitrary graphs will be presented in future work.
On the other hand, this game exhibits a trivial depth-1 bound $\beta_1=1$, as the strategy whereby each party outputs $y_j = a_{j-1}a_j(b_j+a_{j-1}) = x_{j-1}x_jx_{j+1}~\mathrm{mod}~2$ is perfect.  Thus, the stabilizer submeasurement games cannot be used to demonstrate quantum advantage in terms of circuit depth.

\subsection{$C_n$ stabilizer submeasurement games}

The premise and win conditions of the $C_6$ stabilizer submeasurement games introduced in Sec.~\ref{sec:ss_games} generalize to the $n$-qubit scenario as follows.

\begin{def:general_Stab_Sub_Game}[$\mathrm{SS}(C_n,\mathcal{I})$]
Consider a game where $n$ parties corresponding to the vertices of the cycle graph $C_n$ are each given a one-bit input $x_j$, $j=0,\ldots,n-1$, from a global input $\mathbf{x}=(x_0,\ldots,x_{n-1})\in\mathcal{I}\subseteq\{0,1\}^n$ drawn at random from the uniform distribution. 
To each input we can associate a global Pauli operator $E(\mathbf{1},\mathbf{x}) = \prod_{j=0}^{{n-1}} i^{x_j} X_j Z_j^{x_j}$, where $\mathbf{1}$ denotes the all-ones string.  Each party then produces an output $y_j\in\{0,1\}$, which collectively forms a string $\mathbf{y}\in\{0,1\}^n$.  The parties are said to win the game whenever
\begin{subequations}\label{eq:SS_win_conditions}
\begin{align}
&\forall P\subseteq E(\mathbf{1},\mathbf{x}) ~\mathrm{s.t.}~P\in\mathcal{S}_{C_n},~\sum_{j\in\mathrm{supp}(P)} y_j = 0 ,\\
&\forall P\subseteq E(\mathbf{1},\mathbf{x}) ~\mathrm{s.t.}~P\in-\mathcal{S}_{C_n},~\sum_{j\in\mathrm{supp}(P)} y_j = 1 ,
\end{align}
\end{subequations}
where $P\subseteq E(\mathbf{1},\mathbf{x})$ means that $P$ is a Pauli operator obtained by replacing some nonidentity tensor factors in $E(\mathbf{1},\mathbf{x})$ with the identity.  Furthermore, $\mathrm{supp}(P) = \mathrm{supp}(\mathbf{a},\mathbf{b})$ where $P = E(\mathbf{a},\mathbf{b})$ (i.e., $\mathrm{supp}(P)$ indexes the qubits where $P$ acts nontrivially) and all arithmetic is performed modulo 2.
\end{def:general_Stab_Sub_Game}

A perfect quantum strategy for this game can be performed by preparing the state $|C_n\rangle$ with a constant-depth quantum circuit and measuring each Pauli term in $E(\mathbf{1},\mathbf{x})$.  On the other hand, geometrically restricted classical circuits require depth $\Omega(n)$ to attain perfect strategies.  Let
\begin{equation}\label{eq:HLF_n_5}
\mathcal{I}_{\mathrm{HLF},n}^{(5)} = \l\{\mathbf{x}\in\{0,1\}^n \middle\vert
\begin{array}{l}
(x_0, x_{2\lfloor  n/6 \rfloor}, x_{2\lfloor n/3 \rfloor}) \in \mathcal{V}, \\
x_j = 0 \text{ for all other } j
\end{array}
\r\},
\end{equation}
where $\mathcal{V} \subset \{0,1\}^3$ is defined as
\begin{equation}
    \mathcal{V} = \{(0,0,0), (0,1,1), (1,0,1), (1,1,0), (1,1,1)\}.
\end{equation}
The following theorem then holds.
\begin{theorem}
\label{thm:n_qubit_ss_games}
Let $D$ be odd and let $n=6D$.  No classical Boolean circuit with depth $D$ and fan-in~$\leq 3$ that is geometrically restricted with respect to the cycle graph $C_n$ can win $\mathrm{SS}(C_n,\mathcal{I}_{\mathrm{HLF},n}^{(5)})$ with average success probability greater than $\beta_D=4/5$. Meanwhile, a constant-depth quantum circuit with the same geometry can achieve $\mathrm{Pr}_Q[\mathrm{win}]=1$.
\end{theorem}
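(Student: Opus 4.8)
The plan is to split the claim into the (easy) quantum direction and the (substantive) classical bound. For the quantum side I would simply invoke the general construction of Sec.~\ref{sec:ss_games}: prepare $|C_n\rangle$ in constant depth (two layers of $\CZ$ gates suffice, since $n=6D$ is even so the cycle is $2$-edge-colorable) and have each party measure the single-qubit Pauli dictated by $E(\mathbf{1},\mathbf{x})$. Every submeasurement $P\subseteq E(\mathbf{1},\mathbf{x})$ lying in $\pm\mathcal{S}_{C_n}$ has a definite eigenvalue on $|C_n\rangle$, so the outcome parities deterministically meet the win conditions and $\mathrm{Pr}_Q[\mathrm{win}]=1$.

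For the classical bound I would first recast each win condition as a linear equation over $\F_2$. Writing $A\subseteq V$ for the support of a candidate submeasurement and $a_j$ for its indicator, a short computation with the generators $S_v=X_v Z_{v-1}Z_{v+1}$ shows $E(\mathbf{1},\mathbf{x})|_A\in\pm\mathcal{S}_{C_n}$ exactly when $a_{j-1}+a_{j+1}=a_j x_j$ for all $j$ (arithmetic mod $2$). Because $x_j=0$ off the three active sites $0,2D,4D$, this recurrence forces all even-index $a_j$ to share a common value $u$ and the odd-index $a_j$ to be constant on each of the three arcs between active sites. The same computation yields $E(\mathbf{1},\mathbf{x})|_A=i^{|\mathbf{x}|}S_A$, so the sign is $+$ when $u=0$ or $|\mathbf{x}|=0$ and $-$ when $u=1$ and $|\mathbf{x}|=2$; this fixes the right-hand side of the equation $\sum_{j\in A}y_j=(\text{sign bit})$.

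The core is then a light-cone parity obstruction. A depth-$D$, fan-in-$\le 3$, $C_n$-restricted circuit has each output $y_j$ depend only on $x_{j-D},\dots,x_{j+D}$. Since the active sites are mutually $2D$ apart and $D$ is \emph{odd}, no even position is equidistant (distance $D$) between two active sites, so every even output sees exactly one active bit, while the odd arc between two active sites sees only the two active bits bordering it. I would select one winning constraint per input: $\sum_{\text{even }j}y_j=0$ for $000$; $\sum_{\text{even }j}y_j+W_{ab}=1$ for each weight-two input, where $W_{ab}$ is the output sum over the single odd arc activated by that input; and $W_{01}+W_{12}+W_{20}=0$ for $111$. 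Summing all five equations mod $2$, the even-output contributions cancel because the four even-parity inputs realize each value of each active bit an even number of times, and the arc sums cancel because each $W_{ab}$ depends only on the two active bits bordering its arc and takes identical arguments in the corresponding weight-two input and in $111$. The left-hand side thus vanishes while the right-hand side equals $0+1+1+1+0=1$, a contradiction. No deterministic circuit satisfies all five selected constraints, so it loses at least one of the five inputs and wins on average at most $\beta_D=4/5$; randomized strategies obey the same bound by convexity.

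I expect the main obstacle to be the bookkeeping that makes the mod-$2$ cancellation airtight: enumerating the valid submeasurements through the recurrence, tracking the $i^{|\mathbf{x}|}$ phase that assigns each constraint its right-hand side, and—most delicately—using the parity of $D$ to guarantee that no even site and no ``third'' active bit leaks into the wrong arc sum. It is precisely the $D$-odd hypothesis that keeps the light cones from straddling active sites symmetrically and thereby preserves the clean GHZ-type cancellation, so the real work lies in verifying that this structure reduces the $n=6D$ game to the $n=6$ Mermin-type contradiction.
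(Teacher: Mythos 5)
Your proposal is correct and follows essentially the same route as the paper's proof in Appendix~\ref{app:Depth_D_Bounds}: the identical light-cone analysis (with $D$ odd ensuring every even output sees exactly one active bit and each odd arc sees only its two bordering active bits), the same five selected win conditions—one for the all-zeros input, one per weight-two input pairing the even sum with the activated arc, and the three-arc condition for the all-ones input—and the same mod-2 summation yielding the contradiction $0=1$, hence the $4/5$ bound, with the quantum side argued identically via constant-depth preparation of $\ket{C_n}$ and Lemma~\ref{lemma:Stab_Sub_Probs}. The only cosmetic difference is that you derive the admissible submeasurements and their signs explicitly from the recurrence $a_{j-1}+a_{j+1}=a_j x_j$ and argue the cancellation by parity counting, whereas the paper parameterizes the even/arc sums as explicit polynomials over $\F_2$ and imports the win conditions from its earlier appendices.
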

We relegate the proof of this statement to Appendix~\ref{app:Depth_D_Bounds}.  The difference in the asymptotic scaling in the classical and quantum circuits depths required to achieve a perfect strategy implies the existence of a quantum computational advantage. Phrased differently, the above theorem demonstrates an unconditional separation between the power of constant-depth quantum circuits and sublinear-depth classical circuits with the same cyclic geometry. 

\begin{remark}
For every even number $6(2m-1)\leq n<6(2m+1)$ the game $\mathrm{SS}(C_n,\mathcal{I}_{\mathrm{HLF},n}^{(5)})$ satisfies the bounds $\beta_{2m-1}=4/5$ and $\beta_{2m}=1$.  Hence, for each odd $D$, $n=6D$ is the smallest example giving the next nontrivial bound of $\beta_D=4/5$.
\end{remark}

 {That is why the the next nontrivial sized instance of this game is for $n=18$.  In this case, a constant-depth quantum circuit can in principle exceed the success probability bound on depth-3 classical circuits of the same cyclic geometry.}
 
Finally, in order to demonstrate quantum advantage against the more powerful class of constant-depth classical circuits without geometric restriction (namely those in Nick's class $\mathsf{NC}^0$), it is imperative either to embed the stabilizer submeasurement game into a 2D grid~\cite{bravyi2018quantum}, or instead play a more complex game such as the magic square game~\cite{bravyi2020quantum}, which is classically hard even on a 1D geometry.  Note that small instances of the 1D magic square game incur a constant overhead in the number of qubits and entangling gates required, and they have a higher classical bound of $8/9$, in contrast to the games proposed here.

\section{\label{sec:discussion}Conclusion} 

In this work, we have proposed and implemented proof-of-principle experimental demonstrations of two types of nonlocal games using six qubits on a gate-based trapped-ion quantum computer.  The second family of games, called the stabilizer submeasurement games, are computational tasks that unconditionally prove quantum advantage against sublinear-depth classical circuits with a geometric restriction in their gate connectivity to the cycle graph.  As summarized in Table~\ref{tab:experimental_results}, our NISQ device surpasses the conventional depth-0 bounds for the first family of games, called cubic Boolean function games, by a significant margin.  These games also provide fine-grained details like state fidelity, which is useful for the characterization of noise in a quantum device. On the other hand,
our results suggest that state-of-the-art devices are now at the level of challenging the more difficult depth-1 bounds of the stabilizer submeasurement games.


As discussed in Sec.~\ref{sec:generalization}, both games generalize nicely to the $n$-qubit scenario. Then, demonstration of an advantageous quantum strategy for the $C_n$ stabilizer submeasurement games gives a particularly meaningful separation against geometrically-restricted classical circuits of depth growing sublinearly with the size of the input according to Theorem~\ref{thm:n_qubit_ss_games}. For example, the stabilizer submeasurement game on an 18-qubit device provides a next target,  {in that surpassing the corresponding bounds would demonstrate achievement of a task that depth-3 classical circuits with a cyclic geometry cannot}. Furthermore, these nonlocal games are friendly to a number of experimental platforms to implement quantum simulation of many-body physics, as these advantageous quantum strategies on a cyclic cluster state could be generalized to 
generic ground states of  {many-body Hamiltonians with} 1D symmetry-protected topological order~\cite{daniel2021quantum}.


\textit{Note added}.~After the completion of the work, it is brought to our attention that a new version of Ref.~\cite{sheffer2021playing} reports, without detailed data, the Honeywell H1 processor \cite{pino2021demonstration} achieves a success probability of 96.9(3)\% for the 2D hidden linear function game in Sec.~\ref{sec:SS8}, which surpasses the depth-1 classical bound.



\begin{acknowledgments}
    We thank Kenneth Brown for helpful discussions and the STAQ Project for providing an opportunity of the collaboration. We also thank the UNM Center for Advanced Research Computing, supported in part by the National Science Foundation, for providing the high-performance computing resources used in this work. The work at UNM led by A.M.~was supported partially by the National Science Foundation STAQ Project (PHY-1818914), PHY-1915011, and the Department of Energy, Office of Science National Quantum Information Science Research Center, Quantum Systems Accelerator. N.M.L.~acknowledges support from the NSF Physics Frontier Center (PHY-1430094) at the Joint Quantum Institute (JQI), the Maryland-Army-Research-Lab Quantum Partnership (W911NF1920181), and the Office of Naval Research (N00014-20-1-2695). A.M.G.~is supported by a JQI Postdoctoral Fellowship.

\end{acknowledgments}



\appendix

\section{\label{sec:stab_sub_details}Additional details on stabilizer submeasurements and expected populations}

In this appendix we elaborate on the notion of stabilizer submeasurements used in the main text.  We discuss the measurement outcomes expected when a stabilizer exists as a subset of the observables measured in a global Pauli string, which is indicative of the quantum correlations between those outcomes.

The following lemma regarding the stabilizer formalism is well known~\cite[Sec.~10.5.3]{nielsen_chuang_2010}.

\begin{lemma}
The expectation value of any $n$-qubit Pauli operator over an $n$-qubit stabilizer state $|\psi\rangle$ with stabilizer group $\mathcal{S}$ is
\begin{subequations}
\begin{align}
    \langle \psi | P | \psi \rangle = +1 \quad &\textrm{ iff }P\in\mathcal{S}\\
    \langle \psi | P | \psi \rangle = -1  \quad &\textrm{ iff }P\in-\mathcal{S}\\
    \langle \psi | P | \psi \rangle = 0  \quad &\textrm{ iff }P\not\in\mathcal{S}.
\end{align}
\end{subequations}
\end{lemma}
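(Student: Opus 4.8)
The plan is to compute $\langle \psi | P | \psi\rangle$ directly from the stabilizer projector and reduce everything to the orthogonality of Pauli operators under the trace. Since $|\psi\rangle$ is a stabilizer state with group $\mathcal{S}$, I may invoke the projector identity already recorded in the main text, $|\psi\rangle\langle\psi| = 2^{-n}\sum_{S\in\mathcal{S}} S$. Writing the expectation value as a trace then gives
\[
\langle\psi|P|\psi\rangle = \tr\l(P|\psi\rangle\langle\psi|\r) = \frac{1}{2^n}\sum_{S\in\mathcal{S}} \tr(PS),
\]
so the whole computation collapses to evaluating the $2^n$ Pauli traces $\tr(PS)$ term by term.

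Next I would recall the trace orthogonality of Hermitian $n$-qubit Paulis: the product $PS$ is again $\pm$ a Hermitian Pauli, and since every nonidentity Pauli is traceless, $\tr(PS) = 2^n$ if $P=S$, $\tr(PS) = -2^n$ if $P=-S$, and $\tr(PS)=0$ otherwise. Substituting this in, the sum over $\mathcal{S}$ reduces to counting which group elements coincide with $\pm P$, and the three cases follow immediately. If $P\in\mathcal{S}$ the single term $S=P$ contributes $2^n$ and the rest vanish, giving $+1$; if $P\in-\mathcal{S}$ the term $S=-P$ contributes $-2^n$, giving $-1$; and if $P\notin\pm\mathcal{S}$ no term matches and the sum is $0$.

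The only point requiring care — and the natural place for an error to creep in — is the bookkeeping of $\pm\openone$ together with the defining property $-\openone\notin\mathcal{S}$ of a genuine stabilizer group. Concretely, one must check that $P$ cannot lie in both $\mathcal{S}$ and $-\mathcal{S}$ (which would make the three cases overlap): if it did, then $P\cdot(-P) = -\openone$ would be a product of two group elements and hence belong to $\mathcal{S}$, a contradiction. This same fact guarantees that in the $P\in\mathcal{S}$ case no stray term contributes $-2^n$, so the cases are genuinely mutually exclusive and exhaustive.

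As an alternative I would note a more structural argument that sidesteps the trace entirely, which I might mention in a remark. If $P$ anticommutes with some generator $g\in\mathcal{S}$, then using $g|\psi\rangle=|\psi\rangle$, Hermiticity of $g$, and $g^2=\openone$ one gets $\langle\psi|P|\psi\rangle = \langle\psi|gPg|\psi\rangle = -\langle\psi|P|\psi\rangle = 0$. If instead $P$ commutes with all of $\mathcal{S}$, then maximality of the stabilizer group (its binary symplectic representation is a Lagrangian subspace, equal to its own symplectic complement) forces $P=\pm S$ for some $S\in\mathcal{S}$, giving $\pm 1$. I would keep the trace computation as the primary proof, since the projector formula is already available, and relegate this commutation argument to a supporting remark.
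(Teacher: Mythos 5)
Your proof is correct, but there is nothing in the paper to compare it against: the paper states this lemma as well known, citing Ref.~\cite[Sec.~10.5.3]{nielsen_chuang_2010}, and supplies no proof of its own. Your argument fills that gap in a way that is consistent with the paper's toolkit, since the projector identity $\op{\psi}{\psi}=2^{-n}\sum_{S\in\mathcal{S}}S$ you invoke is exactly the one recorded in Sec.~\ref{sec:fid_and_gme}, and the same identity underlies the paper's proof of the companion Lemma~\ref{lemma:Stab_Sub_Probs} in the same appendix. Two points of precision are worth fixing. First, when $P$ anticommutes with $S$ the product $PS$ is $\pm i$ times a Hermitian Pauli, not ``$\pm$ a Hermitian Pauli'' as you write; this does not damage the argument, because a nonidentity Pauli string is traceless regardless of its phase, and $PS$ can be proportional to the identity only with proportionality $\pm 1$ when both operators are Hermitian, but the sentence should be phrased accordingly. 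Second, you are right to read the third case as $P\notin\pm\mathcal{S}$: as literally printed ($P\notin\mathcal{S}$) the lemma would assign both $-1$ and $0$ to any $P\in-\mathcal{S}$, and your observation that $-\openone\notin\mathcal{S}$ prevents $\mathcal{S}$ and $-\mathcal{S}$ from intersecting is precisely what makes the three cases mutually exclusive and exhaustive, so the ``iff'' claims hold. The alternative commutation/maximality argument you sketch at the end is also valid (it is the more common textbook route), but given that the projector formula is already available in the paper, your choice of the trace computation as the primary proof is the natural one.
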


The following is an extension of this lemma for the case when $P$ contains stabilizers within it.
\begin{lemma}
Consider a global $n$-qubit Pauli operator $P = \prod_{j=1}^n P_j$ for some choice of $P_j\in\{X_j,Y_j,Z_j\}$ (i.e., a Pauli string $P$ that has nontrivial support on every qubit).  Let $P\cap \mathcal{S}\subseteq \mathcal{S}$ denote stabilizer elements that are contained in $P$.  Denote each such $S\in P\cap\mathcal{S}$ with a binary string $\mathbf{f}^{(S)}\in\{0,1\}^n$ whereby $S = (-1)^{\lambda_S}\prod_{j=1}^n P_j^{f_j^{(S)}}$ with $\lambda_S\in\{0,1\}$. If a measurement of $P$ is performed by locally measuring each Pauli operator in the string, then the outcomes $\mathbf{y}\in\{0,1\}^n$ occur with probability
\begin{equation}
\Pr[\mathbf{y}] =
\begin{cases}
|P\cap\mathcal{S}| / 2^{n} & \text{if } \mathbf{f}^{(S)}\cdot\mathbf{y}=\lambda_S \text{ for all } S\in P\cap\mathcal{S} \\
0 & \text{ otherwise.}
\end{cases}
\end{equation}
\label{lemma:Stab_Sub_Probs}
\end{lemma}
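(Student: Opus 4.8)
The plan is to compute the probability directly from the projective-measurement postulate and then exploit the stabilizer structure via the preceding lemma. Locally measuring each $P_j$ and recording the outcome $(-1)^{y_j}$ corresponds to projecting onto the joint eigenspace with the projector $\Pi_{\mathbf{y}} = \prod_{j=1}^n \tfrac{1}{2}\l(I + (-1)^{y_j}P_j\r)$, so that $\Pr[\mathbf{y}] = \melem{\psi}{\Pi_{\mathbf{y}}}{\psi}$. Expanding the product over the two terms in each factor, I would write $\Pi_{\mathbf{y}} = 2^{-n}\sum_{\mathbf{g}\in\{0,1\}^n}(-1)^{\mathbf{g}\cdot\mathbf{y}}P^{(\mathbf{g})}$, where $P^{(\mathbf{g})} = \prod_{j}P_j^{g_j}$ ranges over all Pauli substrings of $P$. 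This turns the probability into a signed sum of stabilizer-state expectation values $\melem{\psi}{P^{(\mathbf{g})}}{\psi}$.

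The next step is to apply the preceding lemma to each term: $\melem{\psi}{P^{(\mathbf{g})}}{\psi}$ vanishes unless $P^{(\mathbf{g})}\in\pm\mathcal{S}$, i.e. unless the substring $P^{(\mathbf{g})}$ is, up to sign, a stabilizer—which is exactly the condition $\mathbf{g}\in\{\mathbf{f}^{(S)} : S\in P\cap\mathcal{S}\}$. For such $\mathbf{g}$, writing $S = (-1)^{\lambda_S}P^{(\mathbf{f}^{(S)})}$ gives $\melem{\psi}{P^{(\mathbf{f}^{(S)})}}{\psi} = (-1)^{\lambda_S}\melem{\psi}{S}{\psi} = (-1)^{\lambda_S}$, since $S\in\mathcal{S}$. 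Hence only the submeasurement stabilizers survive, leaving $\Pr[\mathbf{y}] = 2^{-n}\sum_{S\in P\cap\mathcal{S}}(-1)^{\mathbf{f}^{(S)}\cdot\mathbf{y}+\lambda_S}$.

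Finally I would evaluate this sign-weighted sum by recognizing it as a character sum over the set $P\cap\mathcal{S}$. The key structural observation is that $P\cap\mathcal{S}$ is in fact a subgroup of $\mathcal{S}$: if $S,S'\in P\cap\mathcal{S}$ then, because each $P_j^2=I$, their product has Pauli part $P^{(\mathbf{f}^{(S)}\oplus\mathbf{f}^{(S')})}$ with no extra phase, so it too is a substring of $P$, with $\mathbf{f}^{(SS')} = \mathbf{f}^{(S)}\oplus\mathbf{f}^{(S')}$ and $\lambda_{SS'} = \lambda_S\oplus\lambda_{S'}$. Thus $S\mapsto\mathbf{f}^{(S)}$ identifies $P\cap\mathcal{S}$ with a linear subspace $L\subseteq\F_2^{\,n}$, and both $\mathbf{f}^{(S)}\mapsto\mathbf{f}^{(S)}\cdot\mathbf{y}$ and $\mathbf{f}^{(S)}\mapsto\lambda_S$ are $\F_2$-linear functionals on $L$. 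The exponent $\mathbf{f}^{(S)}\cdot\mathbf{y}+\lambda_S$ is therefore a homomorphism $L\to\F_2$, so each summand $(-1)^{\mathbf{f}^{(S)}\cdot\mathbf{y}+\lambda_S}$ is a character of $P\cap\mathcal{S}$. By character orthogonality, the sum equals $|P\cap\mathcal{S}|$ when this character is trivial—i.e. when $\mathbf{f}^{(S)}\cdot\mathbf{y}=\lambda_S$ for every $S\in P\cap\mathcal{S}$—and vanishes otherwise, reproducing precisely the two cases in the statement. I expect the main obstacle to lie in this last paragraph: carefully verifying that the phases multiply cleanly so that $\lambda$ is genuinely additive (which rests on the $P_j$ being honest single-qubit Paulis squaring to the identity), since it is exactly this linearity that promotes the summand to a character and thereby forces the uniform value $|P\cap\mathcal{S}|/2^n$ on the allowed outcomes.
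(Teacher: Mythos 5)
Your proof is correct, and although it arrives at the same intermediate identity as the paper, namely
\begin{equation*}
\Pr[\mathbf{y}] \;=\; \frac{1}{2^n}\sum_{S\in P\cap\mathcal{S}}(-1)^{\mathbf{f}^{(S)}\cdot\mathbf{y}+\lambda_S},
\end{equation*}
it gets there and evaluates it by a genuinely different route. The paper conjugates by the local Clifford $U$ mapping each $P_j$ to $Z_j$, expands the \emph{state} projector $\op{\psi}{\psi} = 2^{-n}\sum_{S\in\mathcal{S}}S$, and discards every term whose image under $U$ fails to be diagonal in the computational basis; you instead expand the \emph{measurement} projector $\Pi_{\mathbf{y}} = \prod_j\tfrac{1}{2}\l(I+(-1)^{y_j}P_j\r)$ into the $2^n$ substrings of $P$ and kill the non-stabilizer terms by invoking the preceding lemma on Pauli expectation values --- a lemma the paper states but never actually uses in its own proof. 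The more substantive difference is how the sum is evaluated: the paper obtains the ``otherwise $0$'' case from a separate eigenvalue argument (inserting $S\ket{\psi}=\ket{\psi}$ forces $\langle\mathbf{y}|U|\psi\rangle$ to vanish unless $\mathbf{f}^{(S)}\cdot\mathbf{y}=\lambda_S$), after which every surviving term equals $+1$; you instead prove that $P\cap\mathcal{S}$ is a subgroup on which both $S\mapsto\mathbf{f}^{(S)}$ and $S\mapsto\lambda_S$ are $\F_2$-linear, so the sum becomes a character sum that equals $|P\cap\mathcal{S}|$ or $0$ by orthogonality. The paper's eigenvalue trick is shorter precisely because it never needs this subgroup structure explicitly; your character argument costs the extra verification that the phases compose additively (which you correctly reduce to $P_j^2=I$), but it handles both cases of the dichotomy in one stroke and makes explicit the linear-algebraic structure over $\F_2$ that the paper only exploits later, in its Appendix~B rank--nullity analysis of the win probability.
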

\begin{proof}
Suppose that $U$ is the local Clifford unitary that diagonalizes $P$ in the sense that $UPU^\dagger = \prod_{j=1}^n Z_j$.  Then for each $S\in P\cap\mathcal{S}$, $USU^\dagger = (-1)^{\lambda_{S}}\prod_{j=1}^n Z_j^{f_j^{(S)}}$.  First note that
\begin{equation}
    \begin{split}
    \langle\mathbf{y}|U|\psi\rangle &= \langle \mathbf{y}|U S|\psi\rangle \\
    &= \langle \mathbf{y}|U SU^\dagger U|\psi\rangle \\
    &= \langle \mathbf{y}|(-1)^{\lambda_S} \prod_{j=1}^n Z_j^{f_j^{(S)}} U|\psi\rangle \\
    &=  (-1)^{\mathbf{f}^{(S)}\cdot \mathbf{y} + \lambda_S}\langle \mathbf{y}| U|\psi\rangle,
    \end{split}
\end{equation}
which implies that either $\lambda_S = \mathbf{f}^{(S)} \cdot \mathbf{y}$ or $|\langle\mathbf{y}|U|\psi\rangle|^2 = \Pr[\mathbf{y}] = 0$.  

On the other hand, we have that
\begin{equation}
    \begin{split}
    |\langle \mathbf{y}|U|\psi\rangle |^2 &= \langle \mathbf{y} | \frac{1}{2^n}\sum_{S\in\mathcal{S}} U S U^\dagger|\mathbf{y}\rangle \\
    &= \frac{1}{2^n}\sum_{S\in P\cap\mathcal{S}} (-1)^{\lambda_S} \langle\mathbf{y}|\prod_{j=1}^n Z_j^{f^{(S)}_j}|\mathbf{y}\rangle \\
    &= \frac{1}{2^n}\sum_{S\in P\cap\mathcal{S}} (-1)^{\lambda_S}  (-1)^{\mathbf{f}^{(S)}\cdot \mathbf{y}} \\
    &= \frac{1}{2^n}\sum_{S\in P\cap\mathcal{S}} (+1) \\
    &= \frac{|P\cap\mathcal{S}|}{2^{n}}.
    \end{split}
\end{equation}
Therefore, the measurement outcomes $\mathbf{y}\in\{0,1\}^n$ that occur with nonzero probability are those satisfying $\mathbf{f}^{(S)}\cdot\mathbf{y}=\lambda_S$ $\forall S\in P\cap\mathcal{S}$.  Furthermore, each such outcome occurs with uniform probability $|P\cap\mathcal{S}|/2^{n}$.
\end{proof}



\section{Bell operators for stabilizer submeasurement games}

Here we prove a useful expression for computing the average success probability for the stabilizer submeasurement games in terms of the average output parities for each stabilizer submeasurement $P \subseteq E(\mathbf{1}, \mathbf{x})$, which we shall denote by $\langle P \rangle$. This result applies for both quantum and classical strategies:~for a quantum strategy using the state $\rho$,
\begin{equation}
\langle P \rangle = \tr(\rho P),
\end{equation}
while for a classical deterministic strategy assigning outputs $\mathbf{y} \in \{0, 1\}^n$ to each observable in $P$ depending on the input,
\begin{equation}
    \langle P \rangle = (-1)^{\sum_{j \in \mathrm{supp}(P)} y_j}.
\end{equation}

\begin{lemma}
\label{thm:Bell_Operator}
Consider a strategy for any stabilizer submeasurement game $\mathrm{SS}(C_n,\mathcal{I})$. The average success probability can be written in terms of expectation values of the stabilizing operators,
\begin{align}\label{eq:win_prob_stabilizers}
    \Pr[\mathrm{win}] = \frac{1}{|\mathcal{I}|}\sum_{\mathbf{x}\in\mathcal{I}}\frac{1}{|P_\mathbf{x}|}\sum_{P\in P_\mathbf{x}} \sgn(P) \langle P \rangle,
\end{align}
where $P_\mathbf{x} = \{P\subseteq E(\mathbf{1},\mathbf{x}) \mid P\in\pm\mathcal{S}_{C_n}\}$ and $\sgn(P) = \pm1$ whenever $P\in\pm\mathcal{S}_{C_n}$
\end{lemma}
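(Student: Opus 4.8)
The plan is to prove the identity at the level of a single fixed input $\mathbf{x}$ and then average uniformly over $\mathcal{I}$. Since the success probability \eqref{eq:success_prob_def} and the right-hand side of \eqref{eq:win_prob_stabilizers} are both uniform averages over $\mathbf{x}\in\mathcal{I}$, it suffices to show that the conditional win probability given $\mathbf{x}$ equals $\frac{1}{|P_\mathbf{x}|}\sum_{P\in P_\mathbf{x}}\sgn(P)\langle P\rangle$. The conceptual core is that the set $P_\mathbf{x}$ is an elementary abelian $2$-group and that the per-input win indicator can be written as a character sum over this group.

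First I would establish the group structure of $P_\mathbf{x}$. Every $P\in P_\mathbf{x}$ is a product of the local factors $X_j$ (when $x_j=0$) or $Y_j$ (when $x_j=1$) over $\mathrm{supp}(P)$, so the product of two such operators incurs no relative phase and is again a submeasurement of $E(\mathbf{1},\mathbf{x})$; combined with closure of $\pm\mathcal{S}_{C_n}$ under multiplication, this shows that $P_\mathbf{x}$ is closed, contains $\openone$, and consists of involutions. I would then verify that the two maps $P\mapsto\mathbf{f}^{(P)}$ (the support indicator, so that $\mathbf{f}^{(P)}\cdot\mathbf{y}=\sum_{j\in\mathrm{supp}(P)}y_j$) and $P\mapsto\sgn(P)$ are homomorphisms: $\mathbf{f}^{(PP')}=\mathbf{f}^{(P)}\oplus\mathbf{f}^{(P')}$ because matching local factors cancel, and $\sgn(PP')=\sgn(P)\sgn(P')$ because the phase-free product places $PP'$ in the coset $\sgn(P)\sgn(P')\mathcal{S}_{C_n}$.

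Next I would introduce, for each candidate output string $\mathbf{y}$, the function $\chi_\mathbf{y}(P)=\sgn(P)(-1)^{\mathbf{f}^{(P)}\cdot\mathbf{y}}$ and observe that it is a $\{\pm1\}$-valued character of $P_\mathbf{x}$, being a product of the two homomorphisms above. Character orthogonality then gives $\frac{1}{|P_\mathbf{x}|}\sum_{P\in P_\mathbf{x}}\chi_\mathbf{y}(P)=1$ when $\chi_\mathbf{y}$ is trivial and $0$ otherwise. The trivial case $\chi_\mathbf{y}\equiv 1$ is exactly the statement that $\sum_{j\in\mathrm{supp}(P)}y_j=[\sgn(P)=-1]$ for every $P\in P_\mathbf{x}$, i.e., that $\mathbf{y}$ satisfies all the win conditions \eqref{eq:SS_win_conditions}. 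Hence this character sum is precisely the win indicator $g(\mathbf{y})$ for the output $\mathbf{y}$.

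Finally I would close the two cases. For a deterministic classical strategy the output $\mathbf{y}$ is fixed given $\mathbf{x}$, we have $\langle P\rangle=(-1)^{\mathbf{f}^{(P)}\cdot\mathbf{y}}$, and the conditional win probability is just $g(\mathbf{y})=\frac{1}{|P_\mathbf{x}|}\sum_P\sgn(P)\langle P\rangle$; randomized classical strategies then follow by taking convex combinations. For the quantum strategy, measuring the local Paulis of $E(\mathbf{1},\mathbf{x})$ produces $\mathbf{y}$ with some distribution $p(\mathbf{y})$, the conditional win probability is $\sum_\mathbf{y}p(\mathbf{y})g(\mathbf{y})$, and interchanging the sums together with $\sum_\mathbf{y}p(\mathbf{y})(-1)^{\mathbf{f}^{(P)}\cdot\mathbf{y}}=\tr(\rho P)=\langle P\rangle$ yields the same expression. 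I expect the main obstacle to be the careful sign bookkeeping: one must keep straight that $P$ denotes the unsigned submeasurement operator that is actually measured, that its output parity is the eigenvalue of $P$ itself, and that $\sgn(P)$ is only the coset label recording whether $P\in\mathcal{S}_{C_n}$ or $P\in-\mathcal{S}_{C_n}$. Conflating these introduces a spurious extra factor of $\sgn(P)$ that breaks the identity, whereas the group-theoretic steps and the appeal to character orthogonality are otherwise routine.
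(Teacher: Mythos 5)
Your proof is correct, and it takes a route that is genuinely different in organization from the paper's, even though both ultimately rest on character orthogonality over $\mathbb{F}_2$. The paper works from the probability side: it writes $\Pr[\mathbf{y}|\mathbf{x}]$ as an expectation of a tensor product of projectors $\tfrac{1}{2}(I+(-1)^{y_j}E(1,x_j))$, binomial-expands this over \emph{all} $2^n$ submeasurements $P\subseteq E(\mathbf{1},\mathbf{x})$, exchanges the sums, and then must do two pieces of work your argument never needs: (i) show that for every $P\notin P_\mathbf{x}$ the sum $\sum_{\mathbf{y}\in\mathrm{Win}(\mathbf{x})}(-1)^{\sum_{j\in\mathrm{supp}(P)}y_j}$ vanishes, which it does by realizing $\mathrm{Win}(\mathbf{x})$ as the solution set of an $\mathbb{F}_2$-linear system $A\mathbf{y}=\mathbf{b}$ and summing a character over its null space, and (ii) compute $|\mathrm{Win}(\mathbf{x})| = 2^n/|P_\mathbf{x}|$ via rank--nullity so the normalization comes out right. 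You instead work from the indicator side: you observe that $P_\mathbf{x}$ is an elementary abelian $2$-group on which both $P\mapsto\mathbf{f}^{(P)}$ and $P\mapsto\sgn(P)$ are homomorphisms, Fourier-expand the win indicator as $g(\mathbf{y})=\tfrac{1}{|P_\mathbf{x}|}\sum_{P\in P_\mathbf{x}}\sgn(P)(-1)^{\mathbf{f}^{(P)}\cdot\mathbf{y}}$ by orthogonality, and then average over the output distribution. This buys you three things: you never have to touch submeasurements outside $P_\mathbf{x}$, you never need the cardinality of $\mathrm{Win}(\mathbf{x})$, and the classical case (deterministic, then randomized by convexity) is handled explicitly on the same footing as the quantum case, whereas the paper's derivation is written in quantum language (projectors and $\tr(\rho\,\cdot)$) with the classical case left implicit in the preamble. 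What the paper's version buys in exchange is the explicit affine-subspace structure of $\mathrm{Win}(\mathbf{x})$ and its size, which is incidental information your argument bypasses. Your closing caution about keeping $P$ (the unsigned measured operator) distinct from $\sgn(P)$ (the coset label) is exactly the right pitfall to flag; the paper's definition of $P_\mathbf{x}$ and $\sgn$ is set up precisely so that this distinction is maintained, and your verification that $\sgn$ is well defined (since $\mathcal{S}_{C_n}\cap-\mathcal{S}_{C_n}=\emptyset$ for a stabilizer group of a state) and multiplicative fills in the one group-theoretic fact the character argument needs.
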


\begin{proof}
Recall from Eq.~\eqref{eq:success_prob_def} that the average success probability can be computed as
\begin{align}
    \Pr[\mathrm{win}] = \frac{1}{|\mathcal{I}|} \sum_{\mathbf{x}\in\mathcal{I}}\sum_{\mathbf{y}\in\mathrm{Win}(\mathbf{x})} \Pr[\mathbf{y}|\mathbf{x}],
\end{align}
where for the game $\mathrm{SS}(C_n,\mathcal{I})$ we have
\begin{widetext}
\begin{equation}
    \mathrm{Win}(\mathbf{x}) =
    \left\{ \mathbf{y}\in\{0,1\}^n ~ \middle\vert ~ \forall P\in P_\mathbf{x},~(-1)^{\sum_{j\in\mathrm{supp}(P)}y_j} = \sgn(P) \right\}.
\end{equation}
Define the single-qubit Weyl operators as $E(a,b) = i^{ab} X^a Z^b$ for $a,b\in\{0,1\}$. The quantity $\Pr[\mathbf{y}|\mathbf{x}]$ can be computed as
\begin{align}
    \Pr[\mathbf{y}|\mathbf{x}] = \left\langle \bigotimes_{j=0}^{n-1} \Pi_{y_j}^{(E(1,x_j))}\right\rangle ,
\end{align}
where $\Pi_{y_j}^{(E(1,x_j))}$ denotes the projector onto the eigenspace corresponding to eigenvalue $(-1)^{y_j}$ of the operator $E(1,x_j)$. Since $E(1,x_j)$ is a single-qubit Pauli observable, we can write
\begin{align}
    \Pi_{y_j}^{(E(1,x_j))} = \frac{I+(-1)^{y_j}E(1,x_j)}{2},
\end{align}
hence
\begin{align}
    \Pr[\mathrm{win}] = \frac{1}{|\mathcal{I}|} \sum_{\mathbf{x}\in\mathcal{I}}\sum_{\mathbf{y}\in\mathrm{Win}(\mathbf{x})} \left\langle\bigotimes_{j=0}^{n-1} \frac{I+(-1)^{y_j}E(1,x_j)}{2} \right\rangle.
\end{align}
Upon binomial-expanding the product in the expectation value, we get
\begin{align}
    \Pr[\mathrm{win}] &= \frac{1}{2^n|\mathcal{I}|} \sum_{\mathbf{x}\in\mathcal{I}}\sum_{\mathbf{y}\in\mathrm{Win}(\mathbf{x})} \sum_{\mathbf{q}\in\{0,1\}^n}\left\langle\bigotimes_{j=0}^{n-1} (-1)^{q_j y_j} E(1,x_j)^{q_j} \right\rangle \\
    &=\frac{1}{2^n|\mathcal{I}|} \sum_{\mathbf{x}\in\mathcal{I}}\sum_{\mathbf{y}\in\mathrm{Win}(\mathbf{x})} \sum_{\mathbf{q}\in\{0,1\}^n}(-1)^{\mathbf{q}\cdot\mathbf{y}}\left\langle E(\mathbf{q},\mathbf{q}\odot\mathbf{x}) \right\rangle,
\end{align}
where $\mathbf{q} \odot \mathbf{x} = (q_0x_0,\ldots,q_{n-1}x_{n-1})$ denotes elementwise multiplication of vectors.  Notice that the $j$th term in the $n$-qubit Pauli operator $E(\mathbf{q},\mathbf{q}\odot\mathbf{x})$ is $I$ if $q_j=0$, and $E(1,x_j)$ if $q_j=1$.  Therefore, the sum over $\mathbf{q}$ is the same as a sum over all Pauli operators $P$ such that $P\subseteq E(\mathbf{1},\mathbf{x})$.  Hence,
\begin{align}
    \Pr[\mathrm{win}] &=\frac{1}{2^n|\mathcal{I}|} \sum_{\mathbf{x}\in\mathcal{I}}\sum_{\mathbf{y}\in\mathrm{Win}(\mathbf{x})} \sum_{P\subseteq E(\mathbf{1},\mathbf{x})}(-1)^{\sum_{j\in\mathrm{supp}(P)}y_j}\left\langle P \right\rangle \\
    &= \frac{1}{2^n|\mathcal{I}|} \sum_{\mathbf{x}\in\mathcal{I}} \sum_{P\subseteq E(\mathbf{1},\mathbf{x})}\sum_{\mathbf{y}\in\mathrm{Win}(\mathbf{x})}(-1)^{\sum_{j\in\mathrm{supp}(P)}y_j}\left\langle P \right\rangle.
\end{align}
If $P\in P_\mathbf{x}$, then $\sum_{\mathbf{y}\in\mathrm{Win}(\mathbf{x})}(-1)^{\sum_{j\in\mathrm{supp}(P)}y_j} = |\mathrm{Win}(\mathbf{x})|\sgn(P)$ by definition of $\mathrm{Win}(\mathbf{x})$.  On the other hand, if $P\not\in P_\mathbf{x}$, then $\sum_{\mathbf{y}\in\mathrm{Win}(\mathbf{x})}(-1)^{\sum_{j\in\mathrm{supp}(P)}y_j} = 0$.  Intuitively, this can be understood from the fact that if $P\not\in\pm\mathcal{S}$, then $\sum_{j\in\mathrm{supp}(P)}y_j$ is not constrained by a win condition.  Hence, this quantity will be unconstrained and the corresponding sum over $\mathbf{y}\in\mathrm{Win}(\mathbf{x})$ will vanish.

More rigorously, this occurs because for each $\mathbf{x}\in\mathcal{I}$, $\mathrm{Win}(\mathbf{x})$ corresponds to the solution set of an inhomogeneous system of linear equations $A \mathbf{y} = \mathbf{b}$ over $\mathbb{F}_2$, where $A$ is a matrix where each row $\mu$ corresponds to a binary vector $\mathbf{a}_\mu$ such that $E(\mathbf{a}_\mu,\mathbf{a}_{\mu}\odot\mathbf{x})\in\pm\mathcal{S}$ and $(-1)^{b_\mu} = \sgn(E(\mathbf{a}_\mu,\mathbf{a}_\mu\odot\mathbf{x}))$.  Denote the row space of this matrix as $\mathrm{Row}(A)$.  It is then clear that $\mathrm{rank}(A)=\log_2|P_\mathbf{x}|$, hence $ | \mathrm{Win}(\mathbf{x}) | = 2^{n}/|P_\mathbf{x}|$ by the rank--nullity theorem. Furthermore, each $\mathbf{y}\in\mathrm{Win}(\mathbf{x})$ can be written as $\mathbf{y} = \mathbf{y}_0 + \sum_{\nu=1}^{n-\mathrm{rank}(A)} z_\nu \boldsymbol{\eta}_\nu$, where $A\mathbf{y}_0 = \mathbf{b}$, $z_\nu\in\mathbb{F}_2$, and $\{\boldsymbol{\eta}_\nu\}_{\nu=1}^{n-\mathrm{rank}(A)}$ spans the null space of $A$.  It then follows that for any $P = E(\mathbf{q},\mathbf{q}\odot\mathbf{x})\not\in\pm\mathcal{S}$,
\begin{align}
    \sum_{\mathbf{y}\in\mathrm{Win}(\mathbf{x})}(-1)^{\sum_{j\in\mathrm{supp}(P)}y_j} = (-1)^{\mathbf{q}\cdot\mathbf{y}_0}\prod_{\nu=1}^{n-\mathrm{rank}(A)}\sum_{z_\nu=0}^1 (-1)^{z_\nu \mathbf{q}\cdot\boldsymbol{\eta}_\nu}.
\end{align}
The only way for the right-hand side to be nonzero is for $\mathbf{q}\cdot\boldsymbol{\eta}_\nu=0$ for each $\nu=1,\ldots,n-\mathrm{rank}(A)$;~however, this can only be true if $\mathbf{q}\in\mathrm{Row}(A)$, which does not hold by virtue of the fact that $P\not\in\pm\mathcal{S}$.  Therefore
\begin{align}
    \Pr[\mathrm{win}] = \frac{1}{|\mathcal{I}|}\sum_{\mathbf{x}\in\mathcal{I}}\frac{1}{|P_\mathbf{x}|}\sum_{P\in P_\mathbf{x}} \sgn(P)\langle P \rangle,
\end{align}
\end{widetext}
which completes the proof.
\end{proof}

Using the above expression, for the game $\mathrm{SS}(C_6,\mathcal{I}_{\mathrm{HLF}}^{(8)})$ we obtain
\begin{align}
    \Pr[\mathrm{win}]= &\frac{1}{32}\big(12 + 12 \langle X_1 X_3 X_5 \rangle + \langle X_0X_2X_4 \rangle   \nonumber\\
    &+ \langle X_0 X_1 X_2 X_3 X_4 X_5 \rangle \nonumber \\
&- \langle X_0 Y_2 X_3 Y_4\rangle - \langle X_0 X_1 Y_2 Y_4 X_5\rangle \nonumber \\
& - \langle Y_0 X_2 Y_4 X_5\rangle - \langle Y_0 X_1 X_2 X_3 Y_4\rangle \nonumber \\
&- \langle Y_0 X_1 Y_2 X_4\rangle - \langle Y_0 Y_2 X_3 X_4 X_5\rangle\big) .
\end{align}
Similarly, for the game $\mathrm{SS}(C_6,\mathcal{I}_{\mathrm{HLF}}^{(5)})$ we obtain
\begin{align}
    \Pr[\mathrm{win}] = &\frac{1}{20}\big(6 + 6 \langle X_1 X_3 X_5 \rangle + \langle X_0X_2X_4 \rangle  \nonumber \\
    &+ \langle X_0 X_1 X_2 X_3 X_4 X_5 \rangle \nonumber \\
&- \langle X_0 Y_2 X_3 Y_4\rangle - \langle X_0 X_1 Y_2 Y_4 X_5\rangle \nonumber \\
& - \langle Y_0 X_2 Y_4 X_5\rangle - \langle Y_0 X_1 X_2 X_3 Y_4\rangle \nonumber \\
&- \langle Y_0 X_1 Y_2 X_4\rangle - \langle Y_0 Y_2 X_3 X_4 X_5\rangle\big).
\end{align}



\section{Quantum strategy for the $C_6$ cubic Boolean function game}
\label{app:CBFn}

To demonstrate that the quantum strategy for the $C_6$ cubic Boolean function game is perfect, we prove a lemma regarding the structure of the signs appearing in the stabilizer group of the $n$-qubit cyclic cluster state.  First, define the single-qubit Weyl operators as $E(a,b) = i^{ab} X^a Z^b$ for $a,b\in\{0,1\}$.  Furthermore, denote the generators of the $n$-qubit cyclic cluster state stabilizer group as $S_j=Z_{j-1}X_{j}Z_{j+1}$ where the subscripts are taken mod~$n$.  We then have the following lemma.

\begin{lemma}
\label{lemma:CBF_n}
Every element of the stabilizer group of the $n$-qubit cyclic cluster state $|C_n\rangle$, when denoted as a product of generators $S_\mathbf{x}=\prod_{j=0}^{n-1} S_j^{x_j}$ for some $\mathbf{x}\in\{0,1\}^n$, can be expressed as a Pauli string times a phase via
\begin{align}
    S_{\mathbf{x}} = (-1)^{g_n(\mathbf{x})} \bigotimes_{j=0}^{n-1} E(x_j,x_{j-1} + x_{j+1}),
\end{align}
where $g_n(\mathbf{x})=\sum_{j= 0}^{n-1} x_{j-1} x_j x_{j+1}$ and all arithmetic is performed $ \mathrm{mod}~2 $.
\end{lemma}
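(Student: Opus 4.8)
The plan is to reduce the whole statement to tracking a single overall sign, since the Pauli content of the product is immediate. First I would write each generator with disjoint $X$- and $Z$-support as $S_j = X_j Z_{j-1} Z_{j+1}$, which carries no phase because the three tensor factors act on distinct qubits, and observe that the generators pairwise commute. Consequently the product $S_\mathbf{x} = \prod_{j=0}^{n-1} S_j^{x_j}$ is independent of the order in which it is formed, so it is a well-defined operator that I may evaluate in whatever order is most convenient.

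Next I would collect the product qubit by qubit. On qubit $k$ the only $X$ factor is $X_k^{x_k}$, coming from $S_k$, while the $Z_k$ factors come from $S_{k-1}$ and $S_{k+1}$, together contributing $Z_k^{x_{k-1}+x_{k+1}}$. Hence, up to a sign, $S_\mathbf{x}$ equals the separated string $\bigotimes_k X_k^{x_k} Z_k^{x_{k-1}+x_{k+1}}$ in which every $X$ precedes every $Z$ on each site. That sign is $(-1)^T$, where $T$ counts the same-qubit pairs in which some $Z_k$ sits to the left of $X_k$ in the chosen ordering with both associated bits equal to one. Fixing the order $S_0 S_1 \cdots S_{n-1}$ and bookkeeping these anticommutations—with care at the cyclic seam, where the $Z_0$ produced by $S_{n-1}$ lands to the right of $X_0$ while the two $Z_{n-1}$ contributions from $S_{n-2}$ and $S_0$ both land to the left of $X_{n-1}$—I would show that $T \equiv \sum_{k=0}^{n-1} x_{k-1} x_k \pmod 2$, the cyclic nearest-neighbor sum.

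It then remains to pass to the Weyl convention. Since $E(x_k, b_k) = i^{x_k b_k} X_k^{x_k} Z_k^{b_k}$ with $b_k = (x_{k-1}+x_{k+1}) \bmod 2$, I have $X_k^{x_k} Z_k^{x_{k-1}+x_{k+1}} = i^{-x_k b_k} E(x_k, b_k)$, so that $S_\mathbf{x} = (-1)^T\, i^{-\sum_k x_k b_k} \bigotimes_k E(x_k, b_k)$. The crux is to show this prefactor collapses to $(-1)^{g_n(\mathbf{x})}$. The key identity is that replacing $b_k$ by the honest integer $x_{k-1}+x_{k+1}$ costs $2\lfloor (x_{k-1}+x_{k+1})/2\rfloor = 2 x_{k-1} x_{k+1}$, whence $\sum_k x_k b_k = \sum_k x_k(x_{k-1}+x_{k+1}) - 2\sum_k x_{k-1} x_k x_{k+1} = 2\sum_k x_{k-1} x_k - 2 g_n(\mathbf{x})$ as integers. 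This gives $i^{-\sum_k x_k b_k} = (-1)^{\sum_k x_{k-1}x_k}(-1)^{g_n(\mathbf{x})}$, and multiplying by $(-1)^T = (-1)^{\sum_k x_{k-1}x_k}$ cancels the nearest-neighbor sums modulo two, leaving exactly $(-1)^{g_n(\mathbf{x})}$.

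The main obstacle is bookkeeping rather than conceptual difficulty: one must treat the cyclic seam correctly when evaluating $T$, and must convert the modulo-two reduction in the Weyl second argument into the integer sum, which is precisely the step that manufactures the cubic term $g_n$ out of quadratic data. As a consistency check I would verify the case $n=3$, $\mathbf{x}=(1,1,0)$ by hand, where the formula returns $Y_0 Y_1 I$ with trivial sign. An alternative route that sidesteps the seam is induction on the Hamming weight of $\mathbf{x}$: multiplying a known $S_\mathbf{x}$ by one further generator $S_m$ updates the Weyl string in the expected manner, and the accrued phase is then checked against the discrete derivative $g_n(\mathbf{x}+e_m)-g_n(\mathbf{x}) = x_{m-2}x_{m-1}+x_{m-1}x_{m+1}+x_{m+1}x_{m+2}$.
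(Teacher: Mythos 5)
Your proof is correct, and its skeleton matches the paper's: collect the product of generators site by site, convert each local factor to the Weyl convention, and track the accumulated phase until it collapses to $(-1)^{g_n(\mathbf{x})}$. Where you genuinely depart from the paper is in how the two phase computations are carried out. The paper simply asserts the per-site ``sandwich'' factorization $S_\mathbf{x} = \bigotimes_j Z^{x_{j-1}}X^{x_j}Z^{x_{j+1}}$ (valid, but only because the two seam discrepancies at qubits $0$ and $n-1$ cancel --- a point it leaves implicit), extracts the sign $(-1)^{\sum_j x_j x_{j+1}}$ together with a residual factor $i^{\sum_j x_j(x_{j-1}\oplus x_{j+1})}$, and then disposes of that power of $i$ by a combinatorial argument: the exponent equals twice the number of connected runs of $1$'s of length at least two, which it then re-encodes as $(-1)^{\sum_j (x_{j-1}+1)x_j x_{j+1}}$ before combining with the quadratic sign to get $(-1)^{g_n(\mathbf{x})}$. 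You instead fix a global ordering, derive the normal-ordering sign $T \equiv \sum_k x_{k-1}x_k$ with the cyclic seam handled explicitly (your accounting of which $Z$'s land left of which $X$'s at qubits $0$ and $n-1$ is exactly right), and then manufacture the cubic term purely algebraically via the integer lift $x_{k-1}\oplus x_{k+1} = x_{k-1}+x_{k+1}-2x_{k-1}x_{k+1}$, so that $i^{-\sum_k x_k b_k} = (-1)^{\sum_k x_{k-1}x_k}(-1)^{g_n(\mathbf{x})}$ follows as an identity of integers. What your route buys is rigor precisely where the paper is most informal (its ``a simple analysis shows'' step, which also quietly skirts the all-ones edge case); what the paper's route buys is brevity and a combinatorial interpretation of the sign --- it is $-1$ exactly when the number of runs of adjacent $1$'s of length $\geq 2$ is odd. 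Your closing sanity check ($n=3$, $\mathbf{x}=(1,1,0)$ giving $+Y_0Y_1I$) and the proposed weight-induction alternative with discrete derivative $x_{m-2}x_{m-1}+x_{m-1}x_{m+1}+x_{m+1}x_{m+2}$ are both correct as stated.
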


\begin{proof}
For the $n$-qubit cycle graph state,
\begin{align}
S_\mathbf{x} = \prod_{j=0}^{n-1} (Z_{j-1} X_j Z_{j+1})^{x_j} = \bigotimes_{j=0}^{n-1} Z^{x_{j-1}} X^{x_j} Z^{x_{j+1}}
\end{align}
Each local Pauli operator can be expressed as a Weyl operator times a phase. In terms of the Weyl operators, this expression becomes
\begin{align}
S_\mathbf{x} &=\bigotimes_{j=0}^{n-1} (-1)^{x_{j-1}x_j} X^{x_j} Z^{x_{j-1}+x_{j+1}} \nonumber\\
&= \bigotimes_{j=0}^{n-1} (-1)^{x_{j}x_{j+1}}i^{x_j(x_{j-1}\oplus x_{j+1})} E(x_j,x_{j-1}+x_{j+1}) \nonumber\\
&= (-1)^{\sum_{j=0}^{n-1} x_j x_{j+1}}\prod_{j=0}^{n-1} i^{x_j(x_{j-1}\oplus x_{j+1})} \nonumber\\
&\qquad \times \bigotimes_{k=0}^{n-1} E(x_k, x_{k-1}+x_{k+1}).
\end{align}
Here we have used $\oplus$ to denote the XOR operation (i.e., sum mod 2) in the exponent of $i$ to remind the reader that the exponent must be computed mod 2.  The phase appearing in the last line then has the rather unnatural form of $i$ raised to the power of a Boolean function.  A simple analysis shows that the total number of factors of $i$ accumulated is $\sum_{j=0}^{n-1} x_j(x_{j-1}\oplus x_{j+1})$, which is simply twice the number of connected strings of $1$'s in $\mathbf{x}$ with length greater than one. We thus obtain an overall phase of $\pm1$ whenever this number of connected strings equals 0 or 2 mod 4, respectively.  This phase is also expressed by the function $\sum_{j=0}^{n-1} (x_{j-1}+1)x_{j}x_{j+1}$, but using a base of $-1$ instead of $i$.  Thus we have
\begin{align}
S_\mathbf{x} = (-1)^{g_n(\mathbf{x})} \bigotimes_{k=0}^{n-1} E(x_k,x_{k-1}+x_{k+1}),
\end{align}
which is what we sought to prove.
\end{proof}
It is clear that since $S_\mathbf{x}|C_n\rangle = |C_n\rangle$ we have that,
\begin{align}
    \langle C_n|\bigotimes_{j=0}^{n-1} E(x_j,x_{j-1}+x_{j+1})|C_n\rangle = (-1)^{g_n(\mathbf{x})}.
\end{align}
Therefore, locally measuring each Pauli observable in the string $\bigotimes_{j=0}^{n-1} E(x_j,x_{j-1}+x_{j+1})$ gives measurement outcomes whose parity is equal to $g_n(\mathbf{x})\textrm{ mod 2}$. Hence the strategy presented in the text for the $C_6$ cubic Boolean function game is perfect.

\section{Classical strategies for stabilizer submeasurement games}

Here we derive the classical bounds for the stabilizer submeasurement games, assuming the classical strategy is restricted to depth-0, depth-1, or depth-$ D $ circuits. In particular, we show that for the particular input set $ \mathcal{I}^{(5)}_{\mathrm{HLF}, n} $ (defined in Eq.~\eqref{eq:HLF_n_5} in the main text), the classical bounds are $ 4/5 $ in all cases considered. This improves upon the previously known bound of $ 7/8 $, which corresponds to the input set $ \mathcal{I}^{(8)}_{\mathrm{HLF}, n} $~\cite{bravyi2018quantum}.

\subsection{Depth-0 bounds for stabilizer submeasurement games}

\begin{lemma}
\label{lemma:Depth0}
The depth-$0$ bound for any of the $\mathrm{SS}(C_n,\mathcal{I})$ games can be calculated using the expression in Lemma~\ref{thm:Bell_Operator}.  It follows that no depth-$0$ circuit can win the game $\mathrm{SS}(C_6,\mathcal{I}_{\mathrm{HLF}}^{(8)})$ with probability greater than $7/8$, nor can they win the game $\mathrm{SS}(C_6,\mathcal{I}_{\mathrm{HLF}}^{(5)})$ with probability greater than $4/5$.
\end{lemma}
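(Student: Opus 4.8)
The plan is to combine the Bell-operator identity of Lemma~\ref{thm:Bell_Operator} with the rigid structure of depth-$0$ strategies. First I would invoke convexity: since $\Pr[\mathrm{win}]$ is linear in the (possibly shared-randomness) distribution over deterministic assignments, it suffices to bound deterministic depth-$0$ strategies, where each party outputs a fixed function $y_j(x_j)$ of its own bit. Starting from the explicit expressions for $\mathrm{SS}(C_6,\mathcal{I}_{\mathrm{HLF}}^{(8)})$ and $\mathrm{SS}(C_6,\mathcal{I}_{\mathrm{HLF}}^{(5)})$ obtained from Lemma~\ref{thm:Bell_Operator}, the crucial observation is that a stabilizer submeasurement $P$ occurs only for inputs $\mathbf{x}$ that agree on $\mathrm{supp}(P)$ (namely $x_j=0$ where $P_j=X$ and $x_j=1$ where $P_j=Y$). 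Hence in a depth-$0$ strategy the parity $\sum_{j\in\mathrm{supp}(P)} y_j(x_j)$ is constant across all inputs in which $P$ appears, so each $\langle P\rangle = (-1)^{\sum_{j\in\mathrm{supp}(P)} y_j}$ is a well-defined sign in $\{\pm1\}$. Both success probabilities then reduce to an affine function of these nine signs, e.g. $\tfrac{1}{32}\big(12+12\langle X_1X_3X_5\rangle + \sum_{k=1}^{8}\epsilon_k\langle P_k\rangle\big)$ with $\epsilon_k=\pm1$, so the task becomes a finite optimization over the $\F_2$-affine relations that constrain these signs.

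Next I would isolate the dominant term. Because $\langle X_1X_3X_5\rangle$ carries weight $12$ (resp.\ $6$) while the other eight Pauli terms carry unit weight, I first dispose of the case $\langle X_1X_3X_5\rangle=-1$: even making all eight remaining terms favorable then gives at most $\tfrac{1}{32}(12-12+8)=\tfrac14$ (resp.\ $\tfrac{1}{20}(6-6+8)=\tfrac25$), far below the claimed thresholds, so the optimum forces $\langle X_1X_3X_5\rangle=+1$, i.e.\ $y_1+y_3+y_5\equiv 0$. The heart of the argument is then a Mermin-type parity obstruction among the eight unit-weight terms. Any depth-$0$ assignment automatically satisfies the multiplicative identity $\langle X_0X_2X_4\rangle\,\langle X_0Y_2X_3Y_4\rangle\,\langle Y_0X_2Y_4X_5\rangle\,\langle Y_0X_1Y_2X_4\rangle = \langle X_1X_3X_5\rangle$, since the output-parity exponents of these four submeasurements sum modulo $2$ to $y_1+y_3+y_5$. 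But the winning values require $+1$ for the $+$stabilizer $X_0X_2X_4$ and $-1$ for the three $-$stabilizers, forcing the left side to $-1$ while the right side is $+1$: a contradiction. Thus these four cannot all be won. Moreover, once $\langle X_1X_3X_5\rangle=+1$ the eight terms organize into four pairs (one per input carrying two submeasurements), and within each pair the two win conditions coincide, so they succeed or fail together. Since at least one of the four representative conditions must fail, so does its partner, forcing at least two of the eight unit terms to take their unfavorable value.

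Two unfavorable unit terms lower the bracketed sum by $4$, yielding $\Pr_C[\mathrm{win}]\le \tfrac{1}{32}(24+4)=\tfrac{7}{8}$ for $\mathcal{I}_{\mathrm{HLF}}^{(8)}$ and $\le \tfrac{1}{20}(12+4)=\tfrac{4}{5}$ for $\mathcal{I}_{\mathrm{HLF}}^{(5)}$. To finish I would confirm tightness by exhibiting a depth-$0$ assignment that enforces $\langle X_1X_3X_5\rangle=+1$ and wins three of the four pairs, violating exactly one; the corresponding linear system over $\F_2$ is consistent precisely because the obstruction above is the unique dependency among the nine signs, so dropping one representative leaves a solvable system. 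The main obstacle I anticipate is bookkeeping rather than ideas: correctly assigning each $\pm$ stabilizer sign via Lemma~\ref{lemma:CBF_n}, identifying which input pins each submeasurement, and verifying the pairing and the single product identity among the four representatives. Once that $\F_2$ ledger is set up, the bounds and their tightness follow immediately, and because the two games share an identical term structure, the obstruction applies verbatim, with only the weights and normalization distinguishing the $7/8$ and $4/5$ thresholds.
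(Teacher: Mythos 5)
Your proof is correct, but it takes a genuinely different route from the one printed in the paper. The paper's proof of this lemma is computational: depth-0 strategies are parameterized by the finitely many local functions $y_j:\{0,1\}\rightarrow\{0,1\}$, the success probability is evaluated through the Bell expression of Lemma~\ref{thm:Bell_Operator} (with a floor enforcing that an input counts as lost when any one constraint fails), and the maxima $7/8$ and $4/5$ are found by brute-force search. You instead give an analytic Mermin-type obstruction: after noting that each $\langle P\rangle$ is a well-defined sign for depth-0 strategies (since $P$ pins down the inputs on its support), you force $\langle X_1X_3X_5\rangle=+1$ at any near-optimal strategy, observe that the four representatives $X_0X_2X_4$, $X_0Y_2X_3Y_4$, $Y_0X_2Y_4X_5$, $Y_0X_1Y_2X_4$ have output parities summing to $y_1+y_3+y_5$, so their signs multiply to $\langle X_1X_3X_5\rangle=+1$, while winning all four demands the product $(+1)(-1)^3=-1$; the pairing then converts one failed representative into two unfavorable unit-weight terms, yielding $\tfrac{1}{32}(24+4)=7/8$ and $\tfrac{1}{20}(12+4)=4/5$. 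I checked your sign assignments ($X_0X_2X_4\in+\mathcal{S}_{C_6}$ and the three $Y$-containing representatives in $-\mathcal{S}_{C_6}$), the product identity, and the pairing; they are all right. In fact your route is essentially the technique the paper uses for the depth-1 bound (Lemma~\ref{lemma:Depth1}), and the paper explicitly remarks that Lemma~\ref{lemma:Depth0} ``could also be proven by the proof technique'' of that lemma, so you reconstructed the paper's stated alternative rather than its printed proof. What each buys: brute force is shorter to state and applies mechanically to any input set $\mathcal{I}$, while your argument is hand-checkable, explains where $7/8$ and $4/5$ come from, and is the style of reasoning that scales to the depth-$D$ setting of Theorem~\ref{thm:n_qubit_ss_games}. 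Two streamlinings are available to you: a deterministic strategy wins an integer number of inputs, so its success probability is a multiple of $1/|\mathcal{I}|$; hence it suffices to show that no depth-0 strategy wins \emph{every} input, after which the bound $(|\mathcal{I}|-1)/|\mathcal{I}|$ gives $7/8$ and $4/5$ at once, making the dominant-term estimate and the pairing bookkeeping unnecessary; and the tightness construction you sketch, while true, is not required by the lemma as stated, which only claims upper bounds.
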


\begin{proof}
Our proof gives a general technique to calculate this bound for any of the $\mathrm{SS}(C_n,\mathcal{I})$ games.  As a classical circuit, depth-0 strategies for these games correspond to circuits with one layer of single-input, single-output gates each computing an affine Boolean function $y_j(x_j)$.  The average success probability can then be calculated from the Bell expression given in Lemma~\ref{thm:Bell_Operator} as
\begin{widetext}
\begin{align}
    2\Pr[\mathrm{win}] - 1 = \frac{1}{|\mathcal{I}|}\sum_{\mathbf{x}\in\mathcal{I}} \left\lfloor \frac{1}{|P_\mathbf{x}|} \sum_{P\in P_\mathbf{x}} (-1)^{\sum_{j\in \mathrm{supp}(P)} y_j(x_j)}\mathrm{sgn}(P) \right\rfloor
\end{align}
\end{widetext}
The floor function takes care of the fact that if one or more on the parity constraints imposed by the win conditions is not met, then the round is counted as a failure [cf.~Eq.~\eqref{eq:SS_win_conditions}].  By maximizing over all possible local functions $y_j:\{0,1\}\rightarrow \{0,1\}$ we can numerically find these bounds for $ \mathrm{SS}(C_6, \mathcal{I}_{\mathrm{HLF}}^{(8)}) $ and $ \mathrm{SS}(C_6, \mathcal{I}_{\mathrm{HLF}}^{(5)}) $ via brute-force search.
\end{proof}
Note that Lemma~\ref{lemma:Depth0} could also be proven by the proof technique in the following Lemma~\ref{lemma:Depth1}.

\subsection{Depth-1 bounds for $\mathrm{SS}(C_6,\mathcal{I}_{\mathrm{HLF}}^{(8)})$ and $\mathrm{SS}(C_6,\mathcal{I}_{\mathrm{HLF}}^{(5)})$}
\label{sec:Triangle_Game_Bounds}
\begin{lemma}
\label{lemma:Depth1}
Depth-$1$, fan-in~$\leq3$ classical circuits that are geometrically restricted with respect to the geometry of $C_6$ cannot win the game $\mathrm{SS}(C_6,\mathcal{I}_{\mathrm{HLF}}^{(8)})$ on more than $7/8$ of the inputs, nor can they win the game $\mathrm{SS}(C_6,\mathcal{I}_{\mathrm{HLF}}^{(5)})$ on more than $4/5$ of the inputs.
\end{lemma}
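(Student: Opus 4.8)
The plan is to reduce the claim to a finite, explicitly solvable system of $\mathbb{F}_2$ parity constraints. Because the average success probability is affine in any shared-randomness mixture of deterministic assignments, it suffices to bound deterministic strategies. For a depth-$1$, fan-in-$\leq 3$ circuit restricted to $C_6$ each output $y_j$ depends only on $x_{j-1},x_j,x_{j+1}$; since every input of $\mathcal{I}_{\mathrm{HLF}}^{(8)}$ and $\mathcal{I}_{\mathrm{HLF}}^{(5)}$ forces $x_1=x_3=x_5=0$, the even outputs $y_0,y_2,y_4$ reduce to functions of a single bit and the odd outputs $y_1,y_3,y_5$ to functions of two bits. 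This collapses the strategy space to $4^3\cdot 16^3=2^{18}$ deterministic assignments, so both bounds can already be certified by a brute-force evaluation of Lemma~\ref{thm:Bell_Operator} on each assignment. I would keep this enumeration as the rigorous backstop, but the bounds are more transparently proved by exhibiting an explicit parity obstruction.

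To build it, I first convert the win conditions into parities. Each input $\mathbf{x}$ contributes one equation $\sum_{j\in\mathrm{supp}(P)}y_j=\tfrac{1-\sgn(P)}{2}$ for every submeasurement $P\subseteq E(\mathbf{1},\mathbf{x})$ lying in $\pm\mathcal{S}_{C_6}$, where the signs are fixed by Lemma~\ref{lemma:CBF_n} via $\sgn(P)=(-1)^{g_6(\mathbf{a})}$. The key step is to select, from each of the five inputs $(x_0,x_2,x_4)\in\{000,011,101,110,111\}$, a single genuine win condition and add them over $\mathbb{F}_2$. Concretely, taking the $X_0X_2X_4$ condition $y_0+y_2+y_4=0$ from $000$, the $-\langle X_0Y_2X_3Y_4\rangle$, $-\langle Y_0X_2Y_4X_5\rangle$, and $-\langle Y_0X_1Y_2X_4\rangle$ conditions from $011,101,110$, and the $X_1X_3X_5$ condition $y_1+y_3+y_5=0$ from $111$, each of the nine distinct output values $y_0(0),y_0(1),y_2(0),y_2(1),y_4(0),y_4(1),y_1(1,1),y_3(1,1),y_5(1,1)$ occurs exactly twice and cancels, while the right-hand sides sum to $1$. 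This yields $0=1$, so no depth-$1$ strategy can win all five inputs, giving the bound $4/5$. Since these same five inputs sit inside $\mathcal{I}_{\mathrm{HLF}}^{(8)}$ with identical win conditions, at least one of the eight inputs must be lost as well, giving $7/8$.

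Finally I would confirm tightness by displaying explicit depth-$0$ assignments achieving $7/8$ and $4/5$ respectively, so the maxima are exactly these values. I expect the main obstacle to be the bookkeeping that produces the correct constraint list: one must determine, for each input, which Pauli strings $P\subseteq E(\mathbf{1},\mathbf{x})$ actually belong to $\pm\mathcal{S}_{C_6}$ and with which sign, since it is precisely these signs that make the selected equations cancel to $0=1$ rather than to a vacuous $0=0$. Once the constraints and their signs are pinned down (and organized using the $\mathbb{Z}_3$ rotation symmetry $j\mapsto j+2$ that permutes the three even sites), verifying the contradiction is routine mod-$2$ arithmetic, and the residual finite search only needs to check that omitting any single input restores solvability.
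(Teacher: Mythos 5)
Your proposal is correct and takes essentially the same route as the paper's proof: both arguments sum the identical five win-condition parities over $\mathbb{F}_2$ (the $X_0X_2X_4$ condition at input $000$, the three negatively-signed four-body stabilizer conditions at $011$, $101$, $110$, and the $X_1X_3X_5$ condition at $111$) to reach the contradiction $0=1$, so at most four of the five inputs of $\mathcal{I}_{\mathrm{HLF}}^{(5)}$ (hence at most seven of the eight inputs of $\mathcal{I}_{\mathrm{HLF}}^{(8)}$) can be won. The only difference is presentational: the paper routes the cancellation through an explicit polynomial parameterization of the outputs $y_j$ in coefficients $\alpha_j,\beta_j,\gamma_j,\delta_j$, whereas you cancel the nine distinct output values directly, with the brute-force enumeration and tightness checks as optional extras.
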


\begin{figure}[h]
    \centering
    \includegraphics[width=0.2\linewidth]{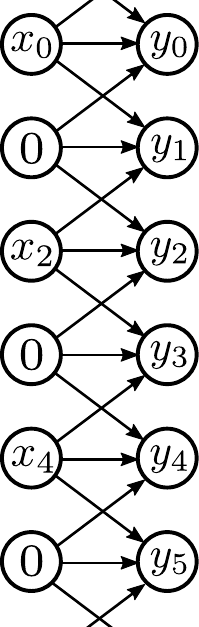}
    \caption{A general depth-1 classical circuit with 3-local conditional gates, conditioned on their nearest-neighboring inputs.  Note that the action of such a gate on one of the even bits is equivalent to a local computation (i.e., their neighboring inputs are fixed as 0).}
    \label{fig:Fanin3}
\end{figure}

\begin{proof}
In the following proof all arithmetic is performed modulo 2.  Consider a classical circuit of depth-1 and fan-in~$\leq3$ that is geometrically restricted with respect to the geometry of $C_6$.  When given an input from $\mathcal{I}_{\mathrm{HLF}}^{(8)}$ or $\mathcal{I}_{\mathrm{HLF}}^{(5)}$, as depicted in Fig.~\ref{fig:Fanin3}, the circuit produces outputs that can be parameterized as follows:
\begin{subequations}
\begin{align}
    y_0 &= \alpha_0 + \beta_0 x_0 \label{eq:y0}\\
    y_1 &= \alpha_1 + \beta_1 x_0 + \gamma_1 x_2 + \delta_1 x_0 x_2 \label{eq:y1}\\
    y_2 &= \alpha_2 + \beta_2 x_2 \label{eq:y2}\\
    y_3 &= \alpha_3 + \beta_3 x_2 + \gamma_3 x_4 + \delta_3 x_2 x_4 \label{eq:y3}\\
    y_4 &= \alpha_4 + \beta_4 x_4 \label{eq:y4}\\
    y_5 &= \alpha_5 + \beta_5 x_4 + \gamma_5 x_0 + \delta_5 x_0 x_4 ,\label{eq:y5}
\end{align}
\end{subequations}
where $\alpha_j,\beta_j,\gamma_j,\delta_j\in\{0,1\}$ for $j=0,\ldots,5$.

If this strategy is to win the game $\mathrm{SS}(C_6,\mathcal{I}_{\mathrm{HLF}}^{(8)})$, it must hold that $y_1 + y_3 + y_5 = 0$ for every input. In particular, for the game $\mathrm{SS}(C_6,\mathcal{I}_{\mathrm{HLF}}^{(5)})$ this must hold for the input $\mathbf{x}=101010$.  By inserting Eqs.~\eqref{eq:y1}, \eqref{eq:y3}, and \eqref{eq:y5} into this expression for $\mathbf{x}=101010$, we obtain a constraint on $\alpha_j$, $\beta_j$, $\gamma_j$, and $\delta_j$ for each $j\in\{1,3,5\}$.  Namely,
\begin{align}
    \sum_{j\in\{1,3,5\}} \alpha_j + \beta_j + \gamma_j + \delta_j = 0. \label{eq:111constraint}
\end{align}

Let $e(\mathbf{x}) = y_0 + y_2 + y_4 = \alpha + \beta_0 x_0 + \beta_2 x_2 + \beta_4 x_4$.  Here we have defined $\alpha = \alpha_0 + \alpha_2 + \alpha_4$.  The win conditions for inputs $\mathbf{x}\in\mathcal{I}_{\mathrm{HLF}}^{(5)}\setminus\{101010\}$ imply that the following linear constraints must hold:
\begin{subequations}
\begin{align}
    e(000000) &= 0 \\
    e(101000) + y_1(101000) &= 1 \\
    e(001010) + y_3(001010) &= 1 \\
    e(100010) + y_5(100010) &= 1.
\end{align}
\end{subequations}
Inserting the aforementioned Boolean functions into these expressions gives
\begin{subequations}
\begin{align}
    \alpha &= 0 \label{eq:000contraint}\\
    \alpha + \beta_0 + \beta_2 + \alpha_1 + \beta_1 + \gamma_1 + \delta_1 &= 1 \\
    \alpha + \beta_2 + \beta_4 + \alpha_3 + \beta_3 + \gamma_3 + \delta_3 &= 1 \\
    \alpha + \beta_0 + \beta_4 + \alpha_5 + \beta_5 + \gamma_5 + \delta_5 &= 1. \label{eq:101constraint}
\end{align}
\end{subequations}
Summing Eq.~\eqref{eq:111constraint} and Eqs.~\eqref{eq:000contraint}--\eqref{eq:101constraint} modulo~$2$ gives the contradictory statement that $0=1$.  Therefore, this class of circuits wins the game $\mathrm{SS}(C_6,\mathcal{I}_{\mathrm{HLF}}^{(8)})$ on no more than $7/8$ of the inputs, and the game $\mathrm{SS}(C_6,\mathcal{I}_{\mathrm{HLF}}^{(5)})$ on no more than $4/5$ of the inputs.
\end{proof}

\subsection{Depth-$D$ bounds for $\mathrm{SS}(C_{6D},\mathcal{I}_{\mathrm{HLF},{6D}}^{(5)})$}
\label{app:Depth_D_Bounds}

\begin{reptheorem}{thm:n_qubit_ss_games}[Restated from main text.]
Let $D$ be odd and let $n=6D$.  No classical circuit with depth-$D$ and fan-in~$\leq 3$ that is geometrically restricted with respect to the cycle graph $C_n$ can win $\mathrm{SS}(C_n,\mathcal{I}_{\mathrm{HLF},n}^{(5)})$ with average success probability greater than $\beta_D=4/5$. Meanwhile, a constant-depth quantum circuit with the same geometry can achieve $\mathrm{Pr}_Q[\mathrm{win}]=1$.
\end{reptheorem}

\begin{proof}
The proof follows similarly to the six-qubit case. All arithmetic here is performed modulo 2.  Let $D$ be an odd positive integer.  Let $\mathcal{I}_{\mathrm{HLF},{6D}}^{(5)} = \{\mathbf{x}\in\{0,1\}^{6D} \mid x_0x_{2D}x_{4D}\in\{000,011,101,110,111\} \text{ and }x_j=0\text{ otherwise}\}$.  A general circuit with depth-$D$ and fan-in~$\leq 3$ that is geometrically restricted with respect to the cycle graph $C_n$ allows each output $y_j$ to depend on the input bits $\{x_{j\pm k} \mid k=1,\ldots,D\}$. Thus, upon feeding an input from $\mathcal{I}_{\mathrm{HLF},{6D}}^{(5)}$ to such a circuit, the only outputs that depend on more than one of the bits $x_0$, $x_{2D}$, or $x_{4D}$ are $y_{D} = y_{D}(x_0,x_{2D})$, $y_{3D} = y_{3D}(x_{2D},x_{4D})$, and $y_{5D} = y_{5D}(x_0,x_{4D})$.  Now let
\begin{subequations}
\begin{align}
    y^{(\mathrm{odd})}_{[0,2D]} &= \sum_{j=1}^{D} y_{2j-1} \\
    y^{(\mathrm{odd})}_{[2D,4D]} &= \sum_{j=D+1}^{2D} y_{2j-1} \\
    y^{(\mathrm{odd})}_{[4D,0]} &= \sum_{j=2D+1}^{3D} y_{2j-1} \\
    y^{(\mathrm{even})} & = \sum_{j=0}^{3D-1} y_{2j} .
\end{align}
\end{subequations}
Then we can parameterize each such function as
\begin{subequations}
\begin{align}
    y^{(\mathrm{odd})}_{[0,2D]} &= \alpha_D + \beta_D x_0 + \gamma_{D} x_{2D} + \delta_{D} x_0 x_{2D} \\
    y^{(\mathrm{odd})}_{[2D,4D]} &= \alpha_{3D} + \beta_{3D} x_{2D} + \gamma_{3D} x_{4D} + \delta_{3D} x_{2D} x_{4D} \\
    y^{(\mathrm{odd})}_{[4D,0]} &= \alpha_{5D} + \beta_{5D} x_{4D} + \gamma_{5D} x_{0} + \delta_{5D} x_0 x_{4D} \\
    y^{(\mathrm{even})} & = \alpha_0 + \beta_0 x_0 + \beta_{2D} x_{2D} + \beta_{4D} x_{4D}.
\end{align}
\end{subequations}
The win conditions for $\mathrm{SS}(C_{6D},\mathcal{I}_{\mathrm{HLF},{6D}}^{(5)})$ state that
\begin{subequations}
\begin{align}
    y^{(\mathrm{even})}(0,0,0) &= 0 \\
    y^{(\mathrm{even})}(1,1,0) + y^{(\mathrm{odd})}_{[0,2D]}(1,1) &= 1 \\
    y^{(\mathrm{even})}(0,1,1) + y^{(\mathrm{odd})}_{[2D,4D]}(1,1) &= 1 \\
    y^{(\mathrm{even})}(1,0,1) + y^{(\mathrm{odd})}_{[4D,0]}(1,1) &= 1 \\
    y^{(\mathrm{odd})}_{[0,2D]}(1,1) + y^{(\mathrm{odd})}_{[2D,4D]}(1,1) + y^{(\mathrm{odd})}_{[4D,0]}(1,1) &= 0.
\end{align}
\end{subequations}
However, substituting in the parameterization above and summing all the equations returns $0=1$.  Therefore, no circuit with depth-$D$ and fan-in~$\leq 3$ that is geometrically restricted with respect to the cycle graph $C_n$ can win the game $\mathrm{SS}(C_{6D},\mathcal{I}_{\mathrm{HLF},{6D}}^{(5)})$ with probability greater than $4/5$.

On the other hand, a perfect quantum strategy for this game can be performed by preparing the state $|C_n\rangle$ with a constant-depth quantum circuit and measuring each Pauli term in $E(\mathbf{1},\mathbf{x})$.  The measurement outcomes will deterministically satisfy Eq.~(\ref{eq:SS_win_conditions}) due to Lemma~\ref{lemma:Stab_Sub_Probs}.
\end{proof}

\begin{corollary}
Circuits in with depth-$(D+1)$ and fan-in~$\leq 3$ that are geometrically restricted with respect to the cycle graph $C_n$ can win the game $\mathrm{SS}(C_{6D},\mathcal{I}_{\mathrm{HLF},{6D}}^{(5)})$ on all inputs.
\end{corollary}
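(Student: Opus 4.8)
The plan is to prove this converse to Theorem~\ref{thm:n_qubit_ss_games} by exhibiting an explicit depth-$(D+1)$, fan-in-$\leq 3$ circuit, geometrically restricted to $C_{6D}$, that reproduces the winning output parities on every input of $\mathcal{I}_{\mathrm{HLF},6D}^{(5)}$. The natural starting point is the parameterization used in the proof of the theorem, where the win conditions on $\mathcal{I}_{\mathrm{HLF},6D}^{(5)}$ were distilled into five $\mathbb{F}_2$-linear equations in the aggregate output sums $y^{(\mathrm{even})}$, $y^{(\mathrm{odd})}_{[0,2D]}$, $y^{(\mathrm{odd})}_{[2D,4D]}$, $y^{(\mathrm{odd})}_{[4D,0]}$, and the obstruction was isolated: a depth-$D$ circuit forces $y^{(\mathrm{even})}$ to be an \emph{affine} function of $(x_0,x_{2D},x_{4D})$, since no even-indexed output can read two of the three special bits, which sit a distance $2D$ apart. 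I would keep every output in the depth-$D$ form of that proof (the extra layer can be padded trivially) and modify a single even-indexed output to carry the one quadratic term that the affine obstruction forbids.

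The crux is to locate where the extra layer buys a genuinely new capability. Since $D$ is odd, the arc midpoints $D,3D,5D$ are odd indices, so their neighbors $D\pm1$, $3D\pm1$, $5D\pm1$ are all \emph{even}. In a depth-$(D+1)$ circuit the light cone of $y_{D+1}$ has radius $D+1$, reading positions $0$ through $2D+2$, which contains both special bits $x_0$ and $x_{2D}$. Hence $y_{D+1}$, an even output, can now be made an arbitrary Boolean function of $(x_0,x_{2D})$ — in particular the product $x_0 x_{2D}$ — something impossible at depth $D$. This is precisely the degree of freedom whose absence produced the contradiction.

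With this term in hand I would re-run the linear algebra of the theorem. Setting $y_{D+1}=x_0 x_{2D}$ (and leaving the other even outputs affine) promotes $y^{(\mathrm{even})}$ to an affine function plus $x_0 x_{2D}$. Summing the five win equations, the three odd-sum contributions cancel modulo $2$, so consistency is governed by $y^{(\mathrm{even})}(0,0,0)+y^{(\mathrm{even})}(1,1,0)+y^{(\mathrm{even})}(0,1,1)+y^{(\mathrm{even})}(1,0,1)$. The affine part contributes $0$, while $x_0 x_{2D}$ evaluates to $0,1,0,0$ on these four arguments, contributing $1$ — exactly matching the summed right-hand side $0+1+1+1+0 = 1 \pmod 2$. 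The obstruction is thereby removed, and I would finish by solving the now-consistent five-equation system for the constant and linear coefficients of the three $y^{(\mathrm{odd})}$ sums and of $y^{(\mathrm{even})}$, producing explicit gate functions for every output.

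The main obstacle is \emph{completeness of verification}: I must confirm that the assignment satisfies every stabilizer-submeasurement parity constraint of Eq.~\eqref{eq:SS_win_conditions} for each of the five inputs in $\mathcal{V}$, not merely the aggregate equations that drive the impossibility direction. I would handle this through the Bell-operator characterization of Lemma~\ref{thm:Bell_Operator} together with Lemma~\ref{lemma:Stab_Sub_Probs}, checking directly that $\langle P\rangle=\sgn(P)$ for every $P\in P_\mathbf{x}$, which forces the success probability to $1$. A secondary point to settle is geometric realizability: one must verify that routing the distant bits $x_0$ and $x_{2D}$ to the single gate computing $y_{D+1}$, and computing each arc's quadratic term at its midpoint, can be laid out within $D+1$ layers of fan-in-$3$ gates respecting $C_{6D}$ adjacency. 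This follows because each required function depends only on inputs inside the corresponding light cone, and arbitrary gates within a light cone are already licensed by the circuit model (the theorem's proof itself parameterizes midpoint outputs as arbitrary functions of two special bits).
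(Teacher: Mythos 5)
Your proposal is correct and follows essentially the same route as the paper: both exploit the fact that at depth $D+1$ the even-indexed outputs $y_{D+1}$, $y_{3D+1}$, $y_{5D+1}$ have light cones containing two special bits each, and then exhibit an explicit winning assignment. The paper's version is slightly cleaner: it sets all three of these outputs to the products $x_0 x_{2D}$, $x_{2D}x_{4D}$, $x_0 x_{4D}$ and every other output to zero, so that $y^{(\mathrm{even})} = x_0 x_{2D} + x_{2D}x_{4D} + x_0 x_{4D}$ and all odd sums vanish identically, which makes verification of every win condition (not just the five aggregate equations) immediate --- sidestepping the linear-system solve and the completeness check that your single-product variant still requires.
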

\begin{proof}
In the following all arithmetic is performed modulo 2.  Notice that in this case, $y_{D+1} = y_{D+1}(x_0,x_{2D})$, $y_{3D+1} = y_{3D+1}(x_{2D},x_{4D})$, and $y_{5D+1} = y_{5D+1}(x_{0},x_{4D})$.  Taking $y_{D+1} = x_0x_{2D}$, $y_{3D+1} = x_{2D}x_{4D}$, $y_{5D+1} = x_{0}x_{4D}$, and $y_j=0$ otherwise gives $y^{(\textrm{even})} = x_0x_{2D} + x_{2D}x_{4D} + x_{0}x_{4D}$ and $y^{(\textrm{odd})}_{[0,2D]} = y^{(\textrm{odd})}_{[2D,4D]} = y^{(\textrm{odd})}_{[4D,0]} = 0$, which satisfies all the above constraints imposed by the win conditions.
\end{proof}



\section{\label{sec:uncert_analysis}Uncertainty analysis}

In this appendix we provide details regarding the statistical estimates of our experiments. In particular, we make explicit the uncertainty contributions due to covariances between simultaneously measured stabilizers, and we show how the SPAM-correction procedure affects these error bars.

\subsection{Raw-data estimates}

In our tomography experiment, we prepare the $ n $-qubit state $ \rho $ using the graph-state formation circuit, further apply a local Clifford unitary $ U $ to rotate into the appropriate Pauli basis, and then perform a projective measurement in the computational basis $ \{ \ket{\mathbf{z}} \mid \mathbf{z} \in \{0,1\}^n \} $. Repeating this procedure $ N $ times yields a collection of $ N $ samples $ \mathbf{b}_1, \ldots, \mathbf{b}_N \in \{0,1\}^n $.

Let $ S \in \{I,X,Y,Z\}^{\otimes n} \times \{\pm 1\} $ be a (signed) $ n $-qubit Pauli observable diagonalized by $ U $, i.e.,
\begin{equation}\label{eq:diagonal_pauli}
	U S U^\dagger = \omega_S \prod_{j=1}^n Z^{f^{(S)}_j}_j,
\end{equation}
where $ f^{(S)}_j = 1 $ if $ S $ acts nontrivially on the $ j $th qubit, and $ 0 $ otherwise. Here, $ \omega_S \in \{\pm 1\} $ denotes the phase factor of $ S $. Then we can construct an estimator for $ \tr(\rho S) $ as
\begin{equation}
	\hat{\mu}_S = \frac{1}{N} \sum_{k=1}^N \omega_S (-1)^{\mathbf{f}^{(S)} \cdot \mathbf{b}_k},
\end{equation}
where $ \mathbf{f}^{(S)} \equiv (f^{(S)}_1, \ldots, f^{(S)}_n) $. Since $ S^2 = \openone $, an estimate for the variance is given by $ \hat{\sigma}_S^2 = 1 - \hat{\mu}_S^2 $, and in particular 
the standard error of the sample mean $ \hat{\mu}_S $ is
\begin{equation}
    \hat{\sigma}_{\hat{\mu}_S} = \frac{\hat{\sigma}_S}{\sqrt{N}}.
\end{equation}
This is the statistical uncertainty reported for each (non-SPAM-corrected) stabilizer expectation value in Table~\ref{tab:stabilizer_exp_values}.

To compute the standard error of our fidelity estimate, we need to address the covariances between stabilizer expectation values that are evaluated within the same measurement setting. Given a set $ \mathcal{C} $ of locally commuting Pauli operators, referred to as a clique, there exists some local Clifford transformation $ U = U_\mathcal{C} $ such that Eq.~\eqref{eq:diagonal_pauli} holds for all $ S \in \mathcal{C} $. Then we can evaluate $ \hat{\mu}_S $ for each $ S \in \mathcal{C} $, using the same samples $ \mathbf{b}_k $, as prescribed above.

Now consider
\begin{equation}
    O_\mathcal{C} = \sum_{S \in \mathcal{C}} h_S S, \quad h_S \in \R.
\end{equation}
By linearity, $ \tr(\rho O_\mathcal{C}) $ is estimated via
\begin{equation}
    \hat{\mu}_{O_\mathcal{C}} = \sum_{S \in \mathcal{C}} h_S \hat{\mu}_S,
\end{equation}
which has variance
\begin{equation}
    \begin{split}
    \V[\hat{\mu}_{O_\mathcal{C}}] &= \Cov[\hat{\mu}_{O_\mathcal{C}}, \hat{\mu}_{O_\mathcal{C}}]\\
    &= \sum_{S, T \in \mathcal{C}} h_S h_T \Cov[\hat{\mu}_S, \hat{\mu}_T].
    \end{split}
\end{equation}
We already have an estimate for the diagonal terms, $ \Cov[\hat{\mu}_S, \hat{\mu}_S] = \V[\hat{\mu}_S] $, via $ \sigma_S^2 $. The off-diagonal covariances $ \Cov[\hat{\mu}_S, \hat{\mu}_T] $ are estimated as
\begin{equation}\label{eq:covariance_estimate}
    \hat{\Sigma}_{S, T} = \hat{\mu}_{ST} - \hat{\mu}_S \hat{\mu}_T.
\end{equation}
Note that this expression involves the Pauli operator $ ST $, even if $ ST \notin \mathcal{C} $. Nonetheless, since $ S $ and $ T $ locally commute, the samples obtained from this measurement setting contain sufficient information to compute $ \hat{\mu}_{ST} $. Recognizing that $ \hat{\Sigma}_{S,S} = \hat{\sigma}_S^2 $, we have a compact expression for the variance of $ \hat{\mu}_{O_\mathcal{C}} $,
\begin{equation}
    \hat{\sigma}_{O_\mathcal{C}}^2 = \sum_{S, T \in \mathcal{C}} h_S h_T \hat{\Sigma}_{S, T}.
\end{equation}

Finally, suppose we have disjoint cliques $ \mathcal{C}_1, \ldots, \mathcal{C}_L $, corresponding to the different measurement settings of our experiment. Such cliques are constructed in order to estimate the observable $ O = \sum_{\ell=1}^L O_{\mathcal{C}_\ell} $. (For instance, our fidelity experiment featured $ O = \op{C_6}{C_6} $, $ L = 37 $, and $ h_S = 2^{-6} \ \forall S \in \mathcal{S}_{C_6} $.) Then we set
\begin{equation}
    \hat{\mu}_O = \sum_{\ell=1}^L \sum_{S \in \mathcal{C}_\ell} h_S \hat{\mu}_S,
\end{equation}
which has standard error
\begin{equation}
    \hat{\sigma}_{\hat{\mu}_O} = \sqrt{\frac{1}{N} \sum_{\ell=1}^L \sum_{S, T \in \mathcal{C}_\ell} h_S h_T \hat{\Sigma}_{S,T}}.
\end{equation}
For simplicity, we have assumed that all cliques were sampled the same number of times $ N $, as was the case in our experiment. This is the statistical uncertainty for the fidelity estimate reported in the main text. Note that there are no covariances between estimators of different cliques, since they correspond to separate, independently obtained samples.

\subsection{SPAM-corrected estimates}

To correct for SPAM errors in our device, we ran a series of calibrating experiments to characterize measurement-readout error rates. This corresponds to constructing a $ 2^n \times 2^n $ stochastic matrix $ M $ such that, for the probability distribution $ q = (q_\mathbf{z} \mid \mathbf{z} \in \{0,1\}^n)^\T $ encoded in the prepared quantum state, we instead observe (due to measurement errors solely) the distribution $ p = Mq $. By computing the inverse matrix $ M^{-1} $, we can recover the SPAM-error-free distribution $ q = M^{-1}p $.

In practice, we only have access to the estimate $ \hat{p} $ of $ p $, constructed from the samples $ \mathbf{b}_1, \ldots, \mathbf{b}_N $. Applying this inversion procedure to $ \hat{p} $, we obtain
\begin{align}
    \hat{q}_z &= \sum_{\mathbf{z}' \in \{0,1\}^n} [M^{-1}]_{\mathbf{z},\mathbf{z}'} \, \hat{p}_{\mathbf{z}'} \notag\\
    &= \sum_{\mathbf{z}' \in \{0,1\}^n} \l( [M^{-1}]_{\mathbf{z},\mathbf{z}'} \frac{1}{N} \sum_{k=1}^N \delta_{\mathbf{z}',\mathbf{b}_k} \r) \notag\\
    &= \frac{1}{N} \sum_{k=1}^N [M^{-1}]_{\mathbf{z},\mathbf{b}_k}.
\end{align}
All SPAM-corrected expectation values (and their associated uncertainties) are then evaluated by simply replacing $ \hat{p} $ with $ \hat{q} $. For instance, in terms of $ M^{-1} $, the expectation-value estimates become
\begin{equation}\label{eq:spam_ev}
    \hat{\mu}_S = \frac{1}{N} \sum_{k=1}^N \l[ \sum_{\mathbf{z} \in \{0,1\}^n} [M^{-1}]_{\mathbf{z},\mathbf{b}_k} \omega_S (-1)^{\mathbf{f}^{(S)} \cdot \mathbf{z}} \r].
\end{equation}
Similarly, the SPAM-corrected covariance matrix between clique elements is
\begin{widetext}
\begin{equation}
    \hat{\Sigma}_{S, T} = \frac{1}{N} \sum_{\mathbf{z}, \mathbf{z}' \in \{0,1\}^n} \l[ \l( \sum_{k=1}^N [M^{-1}]_{\mathbf{z},\mathbf{b}_i} [M^{-1}]_{\mathbf{z}',\mathbf{b}_i} \r) \omega_S \omega_T (-1)^{\mathbf{f}^{(S)} \cdot \mathbf{z} + \mathbf{f}^{(T)} \cdot \mathbf{z}'} \r] - \hat{\mu}_S \hat{\mu}_T,
\end{equation}
\end{widetext}
where $ \hat{\mu}_S, \hat{\mu}_T $ are SPAM-corrected estimates, per Eq.~\eqref{eq:spam_ev}.



\section{\label{sec:additional_data}Additional experimental data}

In Table~\ref{tab:stabilizer_exp_values} we explicitly list our experimental estimates for each stabilizer expectation value, before and after SPAM correction, along with $ 1\sigma $ statistical uncertainties (see Appendix~\ref{sec:uncert_analysis}). These values are the same as those plotted in Fig.~\ref{fig:stabilizer_values_bar_graph} of the main text. Furthermore, in this table the stabilizers are partitioned into our choice of locally commuting sets, such that we measure all stabilizers in each set by a single Pauli measurement setting.

\setlength\LTcapwidth{\linewidth}
\begin{longtable}{c @{\extracolsep{\fill}} c c c}
    
    \caption{Expectation values for the 63 nontrivial stabilizers of $ \ket{C_6} $ estimated from the CBF experiment. Uncertainties denote $1\sigma$ standard errors (details in Appendix~\ref{sec:uncert_analysis}). Horizontal rules indicate our partitioning of the stabilizers into 37 locally commuting sets, such that we estimate all stabilizers in that set from the same global Pauli measurement.}
    \label{tab:stabilizer_exp_values}
    \\
    \toprule
    {} & {} & \multicolumn{2}{c}{Expectation value} \\
    \cline{3-4}
    Input & Stabilizer & Raw value & SPAM-corrected \\
    \hline
    \endfirsthead
    
    \caption*{TABLE~\ref*{tab:stabilizer_exp_values}.~\emph{(Continued)}.}
    \\
    \toprule
    {} & {} & \multicolumn{2}{c}{Expectation value} \\
    \cline{3-4}
    Input & Stabilizer & Raw value & SPAM-corrected \\
    \hline
    \endhead
    
    \hline
    \endfoot
    
    \botrule
    \endlastfoot

$ 000001 $ & $ +ZIIIZX $ & $ 0.7164(10) $ & $ 0.7616(10) $ \\
$ 001000 $ & $ +IZXZII $ & $ 0.6732(10) $ & $ 0.7178(11) $ \\
$ 001001 $ & $ +ZZXZZX $ & $ 0.4952(12) $ & $ 0.5627(14) $ \\
\hline
$ 000010 $ & $ +IIIZXZ $ & $ 0.7208(10) $ & $ 0.7654(10) $ \\
$ 010000 $ & $ +ZXZIII $ & $ 0.7320(10) $ & $ 0.7787(10) $ \\
$ 010010 $ & $ +ZXZZXZ $ & $ 0.5352(12) $ & $ 0.6046(14) $ \\
\hline
$ 000011 $ & $ +ZIIZYY $ & $ 0.7656(9) $ & $ 0.8302(10) $ \\
$ 010011 $ & $ +IXZZYY $ & $ 0.6168(11) $ & $ 0.6813(12) $ \\
\hline
$ 000100 $ & $ +IIZXZI $ & $ 0.6764(10) $ & $ 0.7205(11) $ \\
$ 100000 $ & $ +XZIIIZ $ & $ 0.7200(10) $ & $ 0.7646(10) $ \\
$ 100100 $ & $ +XZZXZZ $ & $ 0.5324(12) $ & $ 0.6019(14) $ \\
\hline
$ 000101 $ & $ +ZIZXIX $ & $ 0.5748(12) $ & $ 0.6221(13) $ \\
$ 010001 $ & $ +IXZIZX $ & $ 0.5584(12) $ & $ 0.6034(13) $ \\
$ 010100 $ & $ +ZXIXZI $ & $ 0.5900(11) $ & $ 0.6448(12) $ \\
\hline
$ 000110 $ & $ +IIZYYZ $ & $ 0.5360(12) $ & $ 0.5803(13) $ \\
$ 100110 $ & $ +XZZYYI $ & $ 0.4368(13) $ & $ 0.4862(14) $ \\
\hline
$ 000111 $ & $ -ZIZYXY $ & $ 0.6112(11) $ & $ 0.6765(12) $ \\
$ 010111 $ & $ -IXIYXY $ & $ 0.5936(11) $ & $ 0.6449(12) $ \\
\hline
$ 001010 $ & $ +IZXIXZ $ & $ 0.5536(12) $ & $ 0.5984(13) $ \\
$ 100010 $ & $ +XZIZXI $ & $ 0.5392(12) $ & $ 0.5884(13) $ \\
$ 101000 $ & $ +XIXZIZ $ & $ 0.5832(11) $ & $ 0.6311(12) $ \\
\hline
$ 001011 $ & $ +ZZXIYY $ & $ 0.6084(11) $ & $ 0.6718(12) $ \\
\hline
$ 001100 $ & $ +IZYYZI $ & $ 0.7376(10) $ & $ 0.8037(10) $ \\
$ 101100 $ & $ +XIYYZZ $ & $ 0.6316(11) $ & $ 0.6981(12) $ \\
\hline
$ 001101 $ & $ +ZZYYIX $ & $ 0.6068(11) $ & $ 0.6712(12) $ \\
\hline
$ 001110 $ & $ -IZYXYZ $ & $ 0.5924(11) $ & $ 0.6565(13) $ \\
$ 101110 $ & $ -XIYXYI $ & $ 0.5308(12) $ & $ 0.5784(13) $ \\
\hline
$ 001111 $ & $ +ZZYXXY $ & $ 0.7056(10) $ & $ 0.7970(11) $ \\
\hline
$ 010101 $ & $ +IXIXIX $ & $ 0.5368(12) $ & $ 0.5690(13) $ \\
$ 111110 $ & $ -YXXXYI $ & $ 0.6228(11) $ & $ 0.6925(12) $ \\
\hline
$ 010110 $ & $ +ZXIYYZ $ & $ 0.4936(12) $ & $ 0.5483(14) $ \\
\hline
$ 011000 $ & $ +ZYYZII $ & $ 0.5768(12) $ & $ 0.6275(13) $ \\
$ 011010 $ & $ +ZYYIXZ $ & $ 0.4676(13) $ & $ 0.5171(14) $ \\
\hline
$ 011001 $ & $ +IYYZZX $ & $ 0.4500(13) $ & $ 0.4976(14) $ \\
\hline
$ 011011 $ & $ +IYYIYY $ & $ 0.5200(12) $ & $ 0.5628(13) $ \\
$ 101101 $ & $ +YIYYIY $ & $ 0.5584(12) $ & $ 0.6037(13) $ \\
$ 110110 $ & $ +YYIYYI $ & $ 0.5016(12) $ & $ 0.5487(13) $ \\
\hline
$ 011100 $ & $ -ZYXYZI $ & $ 0.6444(11) $ & $ 0.7192(12) $ \\
$ 011101 $ & $ -IYXYIX $ & $ 0.5524(12) $ & $ 0.5982(13) $ \\
\hline
$ 011110 $ & $ +ZYXXYZ $ & $ 0.5528(12) $ & $ 0.6233(13) $ \\
\hline
$ 011111 $ & $ -IYXXXY $ & $ 0.6544(11) $ & $ 0.7232(12) $ \\
\hline
$ 100001 $ & $ +YZIIZY $ & $ 0.5404(12) $ & $ 0.5872(13) $ \\
$ 101001 $ & $ +YIXZZY $ & $ 0.4708(12) $ & $ 0.5214(14) $ \\
\hline
$ 100011 $ & $ -YZIZYX $ & $ 0.6172(11) $ & $ 0.6856(12) $ \\
$ 101011 $ & $ -YIXIYX $ & $ 0.5836(11) $ & $ 0.6308(12) $ \\
\hline
$ 100101 $ & $ +YZZXIY $ & $ 0.4624(13) $ & $ 0.5114(14) $ \\
\hline
$ 100111 $ & $ +YZZYXX $ & $ 0.5476(12) $ & $ 0.6190(13) $ \\
\hline
$ 101010 $ & $ +XIXIXI $ & $ 0.5500(12) $ & $ 0.5863(13) $ \\
$ 111101 $ & $ -XXXYIY $ & $ 0.6764(10) $ & $ 0.7486(12) $ \\
\hline
$ 101111 $ & $ -YIYXXX $ & $ 0.6884(10) $ & $ 0.7606(11) $ \\
\hline
$ 110000 $ & $ +YYZIIZ $ & $ 0.7484(9) $ & $ 0.8090(10) $ \\
$ 110100 $ & $ +YYIXZZ $ & $ 0.5948(11) $ & $ 0.6597(13) $ \\
\hline
$ 110001 $ & $ -XYZIZY $ & $ 0.6144(11) $ & $ 0.6785(12) $ \\
$ 110101 $ & $ -XYIXIY $ & $ 0.5580(12) $ & $ 0.6048(13) $ \\
\hline
$ 110010 $ & $ +YYZZXI $ & $ 0.5880(11) $ & $ 0.6549(13) $ \\
\hline
$ 110011 $ & $ +XYZZYX $ & $ 0.6952(10) $ & $ 0.7871(11) $ \\
\hline
$ 110111 $ & $ -XYIYXX $ & $ 0.6428(11) $ & $ 0.7138(12) $ \\
\hline
$ 111000 $ & $ -YXYZIZ $ & $ 0.6248(11) $ & $ 0.6893(12) $ \\
$ 111010 $ & $ -YXYIXI $ & $ 0.5592(12) $ & $ 0.6091(13) $ \\
\hline
$ 111001 $ & $ +XXYZZY $ & $ 0.5164(12) $ & $ 0.5832(14) $ \\
\hline
$ 111011 $ & $ -XXYIYX $ & $ 0.6248(11) $ & $ 0.6908(12) $ \\
\hline
$ 111100 $ & $ +YXXYZZ $ & $ 0.7492(9) $ & $ 0.8456(11) $ \\
\hline
$ 111111 $ & $ +XXXXXX $ & $ 0.8304(8) $ & $ 0.9378(9) $
\end{longtable}


\bibliography{references}

\end{document}